\documentclass[%
 aip,
 jmp,%
 amsmath,amssymb,
 preprint%
]{revtex4-1}

\usepackage{amsthm}
\newtheorem{theo}{Theorem}
\newtheorem{defi}{Definition}
\newtheorem{lemm}{Lemma}
\newtheorem{prop}{Proposition}

\usepackage{braket}
\usepackage{mathrsfs}

\usepackage{graphicx}
\usepackage{dcolumn}
\usepackage{bm}

\begin{document}

\preprint{AIP/123-QED}

\title[]{Construction of the least informative observable 
conserved by a given quantum instrument}

\author{Yui Kuramochi}
 \affiliation{Department of Nuclear Engineering, Kyoto University, 6158540 Kyoto, Japan}
 \email{kuramochi.yui.22c@st.kyoto-u.ac.jp}

\date{\today}

\begin{abstract}
For a quantum measurement process described by a quantum instrument $\mathcal{I}$
and a system observable corresponding to a positive-operator valued measure (POVM) $E ,$
$\mathcal{I}$ is said to conserve the information of $E$
if the joint successive measurement of $\mathcal{I}$ followed by $E$ is 
equivalent to a single measurement of $E .$
We show that for any quantum instrument $\mathcal{I}$
we can construct a POVM conserved by $\mathcal{I}$.
Intuitively the construction gives the infinite joint successive measurement of $\mathcal{I}.$
We also show that the constructed POVM is the least informative observable among POVMs conserved by $\mathcal{I}$,
i.e. the constructed POVM can be realized by 
a classical post-processing of any POVM conserved by $\mathcal{I} .$
As typical examples of quantum instruments,
we explicitly evaluate POVMs of infinite successive measurements 
for photon counting and quantum counter instruments.

\end{abstract}

\pacs{03.65.Ta, 02.50.Cw, 02.30.Cj}
\keywords{completely postive instrument, POVM, fuzzy preorder relation, Kolmogorov extension theorem}
\maketitle

\section{Introduction}
While quantum measurement back-action causes
the state change and the information loss of the system,
some kind of information is known to be conserved 
if we focus on a proper system observable
described by a positive-operator valued measure.
An example of such measurement processes is the photon counting measurement,
also known as the Srinivas-Davies model~\cite{doi10.1080/713820643},
in which we can estimate photon number of the pre-measurement state
from that of the post-measurement state and the measurement outcome.
In general, a measurement process described by a quantum 
instrument~\cite{content/aip/journal/jmp/25/1/10.1063/1.526000}
$\mathcal{I}$
is said to conserve a system observable described by a positive-operator valued measure $E$
if the joint measurement of $\mathcal{I}$ followed by $E$ is equivalent to a single measurement of $E$.


In the previous work~\cite{PhysRevA.91.032110},
the author examined some physical examples of quantum instruments
and showed that some intuitively \lq\lq{}natural\rq\rq{} observable for each instrument 
satisfies the conservation condition.
Then, it is natural to ask whether there exists a POVM $E$ conserved by a given quantum instrument 
$\mathcal{I}$.
The answer, the main result of this paper,
is affirmative, and $E$ can be constructed as the infinite successive joint measurement
of $\mathcal{I}$, called an infinite composition of $\mathcal{I}$.
Furthermore, this infinite successive measurement is shown to be characterized by 
the minimality up to the fuzzy preorder relation~\cite{Dorofeev1997349,de2002foundations,Heinonen200577,jencova2008} 
among the conserved POVMs,
i.e. the infinite successive measurement is the least informative POVM conserved 
by $\mathcal{I}$.
We also reconsider the photon counting and quantum counter instruments 
and explicitly derive their infinite compositions.
As a by-product of the discussion, we also correct a mathematical insufficiency
in the proof of the existing work~\cite{PhysRevA.53.3808}
concerning the convergence of the normalized count number
in the quantum counter measurement.

In Ref.~\onlinecite{PhysRevA.91.032110},
the conservation condition was introduced as
a sufficient condition for the \lq\lq{}relative-entropy conservation law\rq\rq{}
for system observable $E,$ 
which is the generalization of \lq\lq{}Shannon entropy conservation\rq\rq{} derived 
by Ban~\cite{0305-4470-32-9-012}.
In the comparison of two POVMs, 
the discussion of Ref.~\onlinecite{PhysRevA.91.032110}
is largely based on the concept of the sufficient statistic~\cite{halmos1949application},
while it can be formulated more rigorously and briefly 
by using the \textit{fuzzy} preorder and equivalence relations between
POVMs~\cite{Dorofeev1997349,de2002foundations,Heinonen200577,jencova2008}
(Definition~\ref{defi:cons}).
The reformulation of the information conservation condition 
based on the fuzzy equivalence relation is another purpose of the present paper.

This paper is organized as follows.
In Sec.~\ref{sec:2}, some preliminary results concerning the fuzzy preorder and equivalence relations
and the composition of quantum measurement processes 
are reviewed.
In Sec.~\ref{sec:3}, we construct the infinite composition of a given quantum instrument with 
a standard Borel outcome space and show that
it is the least informative POVM that is conserved by the instrument.
In Sec.~\ref{sec:4}, we consider photon counting and quantum counter instruments 
and derive the explicit forms of their infinite compositions.
Sec.~\ref{sec:5} summarizes the main results of this paper.

\section{Preliminaries}
\label{sec:2}
In this section we briefly review some preliminary results on the quantum theory of measurement and fix the notation.
For a general reference of quantum measurement, we refer
Refs.~\onlinecite{heinosaari2011mathematical,davies1976quantum,de2002foundations}.

\subsection{Positive-operator valued measures and fuzzy preorder and eqivalence relations}
We fix a complex Hilbert space $\mathcal{H}$
and denote the set of bounded linear operators on $\mathcal{H}$ as $\mathcal{L}(\mathcal{H})$.
Let $(\Omega , \mathscr{B})$ be a measurable space.
A mapping $E : \mathscr{B} \rightarrow \mathcal{L} (\mathcal{H})$
is called a \textit{positive-operator valued measure (POVM)} with its outcome space $(\Omega , \mathscr{B})$
if 
\begin{enumerate}
\item[(i)]
$E(B) \geq O$ ($\forall B \in \mathscr{B}$);
\item[(ii)]
$E(\Omega) = I$;
\item[(iii)]
for any disjoint $\{ B_i \}_{i \geq 1} \subset \mathscr{B}$,
$E(\cup_{i\geq 1}  B_i ) = \sum_{i\geq 1} E(B_i)$
in the weak operator topology.
\end{enumerate}
Here $O$ and $I$ are zero and identity operators, respectively.
Let $E$ be a POVM with its outcome space $(\Omega , \mathscr{B})$.
A measurable set $N \in \mathscr{B}$ is called an $E$-null set
if $E(N)= O.$
$E$-almost sure equations and convergences for stochastic variables are also defined in a similar manner 
as for a classical probability measure.

Let 
$(\Omega_1 , \mathscr{B}_1)$
and
$(\Omega_2 , \mathscr{B}_2)$
be measurable spaces.
A mapping
$\nu_\cdot (\cdot) \colon \Omega_2 \times \mathscr{B}_1
\ni (\omega_2 , B)
\mapsto 
\nu_{\omega_2} (B)
\in
[0,1]$
is said to be a \textit{Markov kernel} if
\begin{enumerate}
\item[(i)]
for each $\omega_2 \in \Omega_2$, the mapping
$\nu_{\omega_2 } (\cdot) \colon \mathscr{B}_1 \to [0,1]$
is a probability measure;
\item[(ii)]
for each $B \in \mathscr{B}_1$, the mapping
$\nu_\cdot (B) \colon \Omega_2 \to [0,1]$ is $\mathscr{B}_2$-measurable.
\end{enumerate}
Let $E^1$ and $E^2$ be POVMs with their outcome spaces 
$(\Omega_1 , \mathscr{B}_1)$
and 
$(\Omega_2 , \mathscr{B}_2)$,
respectively.
Following Refs.~\onlinecite{de2002foundations,Heinonen200577,jencova2008}, 
we define a relation $E^1 \preceq  E^2$ by the existence of a Markov kernel
$\nu_\cdot (\cdot)  \colon \Omega_2 \times \mathscr{B}_1 \to [0,1]$
such that
\begin{equation}
	E^1 (B)
	=
	\int_{\Omega_2} 
	\nu_{\omega_2} (B) 
	E^2 (d\omega_2) 
	\quad
	(\forall B \in \mathscr{B}_1),
	\label{eq:fuzzy}
\end{equation}
and we say that $E^1$ is \textit{fuzzier} than $E^2$.
POVMs $E^1$ and $E^2$ are said to be \textit{equivalent},
denoted as $E^1 \simeq E^2 ,$
if $E^1 \preceq E^2$ and $E^2 \preceq E^1 .$
The relations
$\preceq$ and $\simeq$ are preorder and equivalence relations
for POVMs~\cite{Heinonen200577,jencova2008}, respectively.
Equation~\eqref{eq:fuzzy} intuitively means that
the measurement of $E^1$ can be realized by performing the measurement of $E^2$
and the classical information processing on the measurement outcome $\omega_2.$

The following lemma concerning the Markov kernel will be used later.
\begin{lemm}
\label{lemm:kernel}
\begin{enumerate}
\item
Let $(\Omega_i , \mathscr{B}_i)$ $(i = 1,2,3)$ be measurable spaces. 
Suppose that
\begin{gather*}
	\nu^{1}_{\cdot} (\cdot) 
	\colon 
	\Omega_2 \times \mathscr{B}_1
	\to
	[0,1]
	\\
	\nu^{2}_{\cdot} (\cdot) 
	\colon 
	\Omega_3 \times \mathscr{B}_2
	\to
	[0,1]
\end{gather*}
are Markov kernels.
Then $\nu^3_{\cdot} (\cdot)$ defined by
\[
	\nu^3_{\omega_3} (B)
	:=
	\int_{\Omega_2}
	\nu^1_{\omega_2}(B)
	\nu^2_{\omega_3} (d\omega_2)
	\quad
	(B \in \mathscr{B}_1 , \omega_3 \in \Omega_3)
\]
is a Markov kernel.
\item
Let 
$(\Omega_i , \mathscr{B}_i)$
$(i =1,2,3)$ 
be measurable spaces and let
$
	\nu_{\cdot} (\cdot) 
	\colon 
	\Omega_3 \times \mathscr{B}_2
	\to
	[0,1]
$
be a Markov kernel.
Define $\tilde{\nu}_\cdot (\cdot)$ by
\begin{gather}
	\tilde{\nu}_{(\omega_1, \omega_3)} (B)
	:=
	\nu_{\omega_3} ( \left. B \right\rvert_{\omega_1}  ) ,
	\label{eq:lemm1}
	\\
	B \rvert_{\omega_1}
	:=
	\{
	\omega_2 \in \Omega_2 |
	(\omega_1 , \omega_2)
	\in 
	B
	\}
	\notag
\end{gather}
for $(\omega_1 ,\omega_3) \in \Omega_1 \times \Omega_3 $
and $B \in \mathscr{B}_1 \times \mathscr{B}_2$.
Then $\tilde{\nu}_\cdot (\cdot )$ is a Markov kernel.
\end{enumerate}
\end{lemm}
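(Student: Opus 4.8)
The plan is to verify, for each of the two constructions, the two defining properties of a Markov kernel: that it is a probability measure in the set argument, and that it is measurable in the parameter argument. The first property will in both cases be a direct consequence of the corresponding property of the given kernels together with a monotone convergence argument; the second will require the \emph{standard machine} (a monotone class / Dynkin argument), and this is where the real work lies.

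For part 1, fix $\omega_3 \in \Omega_3$. I would first check that $\nu^3_{\omega_3}(\cdot)$ is a probability measure on $\mathscr{B}_1$: one has $\nu^3_{\omega_3}(\Omega_1) = \int_{\Omega_2} \nu^1_{\omega_2}(\Omega_1)\, \nu^2_{\omega_3}(d\omega_2) = \int_{\Omega_2} 1\, \nu^2_{\omega_3}(d\omega_2) = 1$, the values lie in $[0,1]$ since $0 \leq \nu^1_{\omega_2}(B) \leq 1$ and $\nu^2_{\omega_3}$ is a probability measure, and for pairwise disjoint $\{B_i\}_{i\geq 1} \subset \mathscr{B}_1$ countable additivity of each $\nu^1_{\omega_2}$ combined with the monotone convergence theorem applied to the partial sums $\sum_{i=1}^n \nu^1_{\omega_2}(B_i)$ gives $\nu^3_{\omega_3}(\cup_i B_i) = \sum_i \nu^3_{\omega_3}(B_i)$. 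For the measurability in $\omega_3$, fix $B \in \mathscr{B}_1$; then $g := \nu^1_\cdot(B) \colon \Omega_2 \to [0,1]$ is $\mathscr{B}_2$-measurable by property (ii) of the kernel $\nu^1$, and I would invoke the standard fact that $\omega_3 \mapsto \int_{\Omega_2} g\, \nu^2_{\omega_3}(d\omega_2)$ is $\mathscr{B}_3$-measurable for every bounded $\mathscr{B}_2$-measurable $g$ --- true for $g = \mathbf{1}_{B_2}$ by property (ii) of $\nu^2$, extended to simple $g$ by linearity, to nonnegative measurable $g$ by monotone convergence, and to bounded measurable $g$ by taking differences. Applying this with $g = \nu^1_\cdot(B)$ finishes part 1.

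For part 2, I would first recall the elementary fact that for $B \in \mathscr{B}_1 \times \mathscr{B}_2$ and $\omega_1 \in \Omega_1$ the section $B\rvert_{\omega_1}$ lies in $\mathscr{B}_2$, so $\tilde{\nu}_{(\omega_1,\omega_3)}(B)$ is well defined. For fixed $(\omega_1,\omega_3)$, the map $B \mapsto \tilde{\nu}_{(\omega_1,\omega_3)}(B)$ is a probability measure on $\mathscr{B}_1 \times \mathscr{B}_2$ because sectioning commutes with set operations: $(\Omega_1 \times \Omega_2)\rvert_{\omega_1} = \Omega_2$ gives total mass $1$, and $(\cup_i B_i)\rvert_{\omega_1} = \cup_i (B_i\rvert_{\omega_1})$ with the sections pairwise disjoint whenever the $B_i$ are, so countable additivity descends from $\nu_{\omega_3}$. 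The remaining point --- measurability of $(\omega_1,\omega_3) \mapsto \nu_{\omega_3}(B\rvert_{\omega_1})$ with respect to $\mathscr{B}_1 \times \mathscr{B}_3$, which is the main obstacle --- I would settle with Dynkin's $\pi$--$\lambda$ theorem. Let $\mathcal{D}$ be the family of $B \in \mathscr{B}_1 \times \mathscr{B}_2$ for which this map is measurable. For a measurable rectangle $B = B_1 \times B_2$ one computes $\nu_{\omega_3}(B\rvert_{\omega_1}) = \mathbf{1}_{B_1}(\omega_1)\, \nu_{\omega_3}(B_2)$, a product of a $\mathscr{B}_1$-measurable and a $\mathscr{B}_3$-measurable function, so $\mathcal{D}$ contains the $\pi$-system of measurable rectangles, which generates $\mathscr{B}_1 \times \mathscr{B}_2$. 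That $\mathcal{D}$ is a $\lambda$-system follows from $\Omega_1 \times \Omega_2 \in \mathcal{D}$, from $(B \setminus A)\rvert_{\omega_1} = B\rvert_{\omega_1} \setminus A\rvert_{\omega_1}$ for $A \subseteq B$ (using finiteness of $\nu_{\omega_3}$, so the relevant map is a difference of two measurable maps), and from continuity from below of $\nu_{\omega_3}$ together with the fact that a pointwise limit of measurable maps is measurable. Dynkin's theorem then gives $\mathcal{D} = \mathscr{B}_1 \times \mathscr{B}_2$, which establishes that $\tilde{\nu}_\cdot(\cdot)$ is a Markov kernel from $(\Omega_1 \times \Omega_3,\, \mathscr{B}_1 \times \mathscr{B}_3)$ to $(\Omega_1 \times \Omega_2,\, \mathscr{B}_1 \times \mathscr{B}_2)$, completing the proof.
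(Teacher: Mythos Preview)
Your proof is correct and follows essentially the same route as the paper. For part~2 your argument is virtually identical to the paper's: verify the probability-measure property via the compatibility of sectioning with set operations, then establish $\mathscr{B}_1\times\mathscr{B}_3$-measurability by a Dynkin $\pi$--$\lambda$ argument starting from the explicit formula on rectangles. For part~1 the paper simply cites a reference, whereas you supply the standard self-contained verification; your argument is the expected one and is fine.
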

\begin{proof}
1 is shown in the proof of the Proposition~1 in Ref.~\onlinecite{Heinonen200577}.
We show 2. 
For each $(\omega_1 ,\omega_3) \in \Omega_1 \times \Omega_3 $,
$\tilde{\nu}_{(\omega_1 ,\omega_3)} (\cdot ) $ is a probability measure.
To show the measurability of $\tilde{\nu}_\cdot (B)$ ($B \in \mathscr{B}_1 \times \mathscr{B}_2 $),
define a class $\mathscr{D}$ of subsets of $\Omega_1 \times \Omega_2$ by
\[
	\mathscr{D}
	:=
	\Set{
	B
	\in \mathscr{B}_1 \times \mathscr{B}_2 
	|
	\text{
	$
	\tilde{\nu}_{\cdot} (B)
	$
	is $\mathscr{B}_1 \times \mathscr{B}_3$-measurable}
	}.
\]
Then $\mathscr{D}$ is a Dynkin system ($\lambda$-system),
i.e. it is closed under countable disjoint unions and proper differences and contains $\Omega_1 \times \Omega_2$. 
For each
$B_i \in \mathscr{B}_i$ $(i=1,2)$,
\[
\tilde{\nu}_{(\omega_1, \omega_3)} (B_1 \times B_2 ) 
= 
\chi_{B_1} (\omega_1) 
\nu_{\omega_3} (B_2)
\]
is $\mathscr{B}_1 \times \mathscr{B}_3$-measurable,
where $\chi_B (\cdot)$ is an indicator function for a subset $B$.
Thus $\mathscr{D}$ contains the class of the cylinder sets
$\Set{ B_1 \times B_2 |  B_1 \in \mathscr{B}_1 , B_2 \in \mathscr{B}_2 }$
and the Dynkin's theorem assures that $\mathscr{D}$ coincides with $\mathscr{B}_1 \times \mathscr{B}_2$,
which proves the assertion.
\end{proof}

\subsection{Completely positive instruments and their compositions}
Let $(\Omega , \mathscr{B})$ is a measurable space.
A \textit{completely positive (CP) instrument}~\cite{davieslewisBF01647093,davies1976quantum,%
content/aip/journal/jmp/25/1/10.1063/1.526000} 
(in the Heisenberg picture)
with its outcome space $(\Omega , \mathscr{B})$
is a mapping 
\[
\mathcal{I}_{\cdot} (\cdot)
\colon
\mathscr{B} \times \mathcal{L}(\mathcal{H})
\ni
(B , a)
\mapsto
\mathcal{I}_{B} (a)
\in 
\mathcal{L} (\mathcal{H})
\]
such that
\begin{enumerate}
\item[(i)]
for any $B \in \mathscr{B}$,
$\mathcal{I}_B (\cdot ) \colon \mathcal{L} (\mathcal{H}) \to \mathcal{L} (\mathcal{H})$
is a normal CP linear map;
\item[(ii)]
for any disjoint $\{ B_i \}_{i \geq 1} \subset \mathscr{B}$
and any $a \in \mathcal{L} (\mathcal{H})$,
$
\mathcal{I}_{\cup_{i\geq 1} B_i}  (a) 
= \sum_{i\geq 1}
\mathcal{I}_{B_i} (a)
$
in the ultraweak operator topology;
\item[(iii)]
$\mathcal{I}_\Omega (I) = I .$
\end{enumerate}
We also define a positive (P) instrument by replacing CP with P in the definition of the CP instrument.
A CP instrument $\mathcal{I}_{\cdot} (\cdot)$ with its outcome space $(\Omega , \mathscr{B})$
describes the statistics of the measurement outcome and the state change due to the measurement
simultaneously in a necessary and sufficient manner.
The POVM corresponding to the probability distribution of the measurement outcome is given by
$E(B) = \mathcal{I}_B (I)$ $(B \in \mathscr{B})$.
Here we only consider the case when
the range and the domain of $\mathcal{I}_B (\cdot)$ is identical,
which is a necessary condition so that we can compare the \textit{same} observable
of the system before and after the measurement.

A measurable space $(\Omega , \mathscr{B})$ Borel isomorphic to a complete separable metric space
is called a \textit{standard Borel} space~\cite{srivastava1998course}.
A (C)P instrument (resp. POVM) with a standard Borel outcome space is called 
a standard Borel (C)P instrument (resp. a standard Borel POVM).
\textit{In the rest of this paper we only consider standard Borel (C)P instruments and POVMs.}

The following theorem,
which is a slight modification of the theorem
due to Davies and Lewis~\cite{davieslewisBF01647093,davies1976quantum}, 
assures the existence of a CP instrument and a POVM corresponding to 
a joint successive measurement process.
\begin{theo}
\label{theo:comp}
\begin{enumerate}
\item[(i)]
Let $\mathcal{I}^i_\cdot (\cdot)$ be a CP instrument with 
a standard Borel outcome space $(\Omega_i , \mathscr{B}_i)$ $(i = 1,2)$.
Then there exists a unique CP instrument $\mathcal{I}^{12}_\cdot (\cdot)$
with the product outcome space $(\Omega_1 \times \Omega_2 , \mathscr{B}_1 \times \mathscr{B}_2)$
such that
\begin{equation}
	\mathcal{I}^{12}_{B_1 \times B_2} (\cdot)
	=
	\mathcal{I}^1_{B_1} 
	\circ
	\mathcal{I}^2_{B_2} (\cdot) 
	\quad
	(B_1 \in \mathscr{B}_1,  B_2 \in \mathscr{B}_2  ).
	\label{eq:th1}
\end{equation}
\item[(ii)]
Let $\mathcal{I}^1_\cdot (\cdot)$ be a CP instrument with 
a standard Borel outcome space $(\Omega_1 , \mathscr{B}_1)$
and let $E^2$ be a POVM with a standard Borel outcome space $(\Omega_2, \mathscr{B}_2)$.
Then there exists a unique POVM $E^{12}$ with the product outcome space 
$(\Omega_1 \times \Omega_2 , \mathscr{B}_1 \times \mathscr{B}_2)$
such that
\begin{equation}
	E^{12} (B_1 \times B_2)
	=
	\mathcal{I}^1_{B_1} ( E^2(B_2))
	\quad
	(B_1 \in \mathscr{B}_1,  B_2 \in \mathscr{B}_2  ).
	\label{eq:th12}
\end{equation}
\end{enumerate}
\end{theo}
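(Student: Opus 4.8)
The plan is to prove this as a measure-theoretic extension result, closely following the construction of Davies and Lewis~\cite{davieslewisBF01647093,davies1976quantum}; I carry out part (i), part (ii) being a routine variant discussed at the end. Note first that a product of two standard Borel spaces is again standard Borel, so it suffices to produce a CP instrument with the stated property. I would reduce everything to scalar measures: for a density operator $\rho$ on $\mathcal{H}$ and an operator $a$ with $O \le a \le I$, set
\[
	\mu_{\rho , a} (B_1 \times B_2)
	:=
	\mathrm{Tr}\!\left[ \rho \, \mathcal{I}^1_{B_1} \bigl( \mathcal{I}^2_{B_2} (a) \bigr) \right]
	\qquad (B_1 \in \mathscr{B}_1 , \ B_2 \in \mathscr{B}_2 ).
\]
The rectangles form a $\pi$-system generating $\mathscr{B}_1 \times \mathscr{B}_2$, the algebra they generate consists of finite disjoint unions of rectangles drawn from a common finite grid, and on this algebra $\mu_{\rho,a}$ is well defined, finitely additive, and bounded by $\mu_{\rho, I} \le 1$. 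The core of the proof is to extend each $\mu_{\rho,a}$ to a finite positive Borel measure on $(\Omega_1\times\Omega_2 , \mathscr{B}_1\times\mathscr{B}_2)$, so that afterwards the operator-valued instrument can be read off by duality.

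The main obstacle is precisely the countable additivity of this premeasure on the generating algebra, and here one must exploit the genuinely iterated structure of $\mathcal{I}^1 \circ \mathcal{I}^2$ rather than merely the separate $\sigma$-additivity of $\mu_{\rho,a}$ in $B_1$ and in $B_2$ --- a separately countably additive bimeasure need not extend to a measure on the product space. The mechanism I would use: for fixed $B_1$, the normal positive functional $b \mapsto \mathrm{Tr}[\rho\,\mathcal{I}^1_{B_1}(b)]$ is given by a unique positive trace-class operator $\rho_{B_1} \le \rho_{\Omega_1}$, and since $\mathrm{Tr}\,\rho_{B_n} = \mathrm{Tr}[\rho\,\mathcal{I}^1_{B_n}(I)] \to 0$ as $B_n \downarrow \emptyset$, the assignment $B_1 \mapsto \rho_{B_1}$ is a trace-norm countably additive operator-valued measure; then $\mu_{\rho,a}(B_1\times B_2) = \mathrm{Tr}[\rho_{B_1}\,\mathcal{I}^2_{B_2}(a)]$. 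Continuity from above at $\emptyset$ on the generating algebra can then be obtained by combining the $\sigma$-additivity of $\mathcal{I}^2_\cdot(a)$ (ultraweakly) and of $B_1 \mapsto \rho_{B_1}$ (in trace norm) with the regularity that the standard Borel hypothesis supplies --- for instance by disintegrating $\mathrm{Tr}[\rho_\cdot\,\mathcal{I}^2_D(a)]$ over $\mathrm{Tr}[\rho\,\mathcal{I}^1_\cdot(I)]$ into an $\Omega_1$-indexed family of measures on $\Omega_2$ (this is where ``standard Borel'' is genuinely used) and applying dominated convergence to the $\omega_1$-sections of the shrinking sets. The Carathéodory--Hahn extension (or, equivalently, the resulting iterated-integral formula for $\mu_{\rho,a}$) then delivers the measure on $\mathscr{B}_1 \times \mathscr{B}_2$.

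For the last stage I would reassemble the instrument and check its axioms. For fixed $C \in \mathscr{B}_1\times\mathscr{B}_2$ and $O \le a \le I$, the map $\rho \mapsto \mu_{\rho,a}(C)$ --- which by uniqueness of the measure extension is affine in $\rho$ on states, hence extends to a bounded linear functional on the trace class --- is represented by a unique operator, which I define to be $\mathcal{I}^{12}_C(a)$; extending in $a$ by linearity (self-adjoint part, then positive and negative parts) is consistent because the scalar extension is unique. Complete positivity and positivity of $\mathcal{I}^{12}_C(\cdot)$ hold on rectangles by the hypotheses on $\mathcal{I}^1$ and $\mathcal{I}^2$ and pass to arbitrary $C$ because every measurable set is approximated in $\mu_{\rho,a}$-measure by a finite disjoint union of rectangles; normality follows from normality of the composition $\mathcal{I}^1_{\Omega_1}\circ\mathcal{I}^2_{\Omega_2}$, since $0 \le \mu_{\rho,a}(C) - \mu_{\rho,a_\lambda}(C) \le \mathrm{Tr}[\rho\,\mathcal{I}^1_{\Omega_1}(\mathcal{I}^2_{\Omega_2}(a-a_\lambda))] \to 0$ along any bounded increasing net $a_\lambda \uparrow a$; the $\sigma$-additivity of $C \mapsto \mathcal{I}^{12}_C(a)$ is the measure property of $\mu_{\rho,a}$; and $\mathcal{I}^{12}_{\Omega_1\times\Omega_2}(I) = \mathcal{I}^1_{\Omega_1}(\mathcal{I}^2_{\Omega_2}(I)) = I$. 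Uniqueness is immediate: if $\mathcal{J}$ also satisfies \eqref{eq:th1}, then for each $\rho$ and $a$ the measures $\mathrm{Tr}[\rho\,\mathcal{J}_\cdot(a)]$ and $\mathrm{Tr}[\rho\,\mathcal{I}^{12}_\cdot(a)]$ agree on the generating $\pi$-system of rectangles and hence everywhere, so $\mathcal{J} = \mathcal{I}^{12}$. For part (ii), I would either run the same argument with $\mathcal{I}^2_{B_2}$ replaced throughout by $E^2(B_2)$ (positivity replacing complete positivity), or simply apply part (i) to $\mathcal{I}^1$ and the auxiliary CP instrument $\mathcal{J}^2_{B_2}(a) := \mathrm{Tr}[\sigma_0\,a]\,E^2(B_2)$ for an arbitrary fixed state $\sigma_0$, and put $E^{12}(C) := \mathcal{I}^{12}_C(I)$, which is a POVM coinciding with $\mathcal{I}^1_{B_1}(E^2(B_2))$ on rectangles.
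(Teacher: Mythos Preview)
Your argument is essentially correct but reconstructs from scratch what the paper simply cites: the existence and uniqueness of a \emph{positive} instrument satisfying~\eqref{eq:th1} is Theorem~4.2.2 of Davies~\cite{davies1976quantum}, and the paper invokes it directly, leaving only the complete-positivity upgrade to prove. Your disintegration route to countable additivity of the premeasure (pushing $\rho$ forward to a trace-norm-valued measure $B_1\mapsto\rho_{B_1}$, then disintegrating over $\Omega_1$ to obtain a transition kernel into $\Omega_2$) is exactly the Davies--Lewis mechanism and correctly identifies the standard-Borel hypothesis as the place where regular conditional distributions are needed. For the CP step, however, your approximation argument is not quite complete as stated: the finite disjoint union of rectangles approximating $C$ in $\mu_{\rho,a}$-measure depends on the particular pair $(\rho,a)$, whereas a single CP test inequality involves finitely many such pairs at once; the repair is to approximate in the sum of the relevant measures, but the paper's device is cleaner---it observes that $\mathscr{D}:=\{C\in\mathscr{B}_1\times\mathscr{B}_2:\mathcal{I}^{12}_C(\cdot)\text{ is CP}\}$ is a Dynkin class containing all rectangles and applies the $\pi$--$\lambda$ theorem. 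Your reduction of part~(ii) to part~(i) via an auxiliary CP instrument realizing $E^2$ is precisely what the paper does.
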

We call the CP instrument $\mathcal{I}^{12}_\cdot (\cdot)$ 
and the POVM $E^{12}$ as \textit{compositions}, and denote them as
$(\mathcal{I}^1 \ast \mathcal{I}^2)_\cdot (\cdot)$ and $\mathcal{I}^1 \ast E^2$,
respectively.
\begin{proof}
We first show (i).
According to Theorem~4.2.2 of Ref.~\onlinecite{davies1976quantum},
there exists a unique P instrument $\mathcal{I}^{12}_\cdot (\cdot)$ such that
the condition~\eqref{eq:th1} holds.
To show the complete positivity of $\mathcal{I}^{12}_\cdot (\cdot)$,
define a class $\mathscr{D}$ of subsets of $\Omega_1 \times \Omega_2$ by
\[
	\mathscr{D}
	:=
	\Set{
	B \in \mathscr{B}_1 \times \mathscr{B}_2
	|
	\text{
	$\mathcal{I}^{12}_B (\cdot )$
	is CP
	}
	} .
\]
From the condition~\eqref{eq:th1}, $\mathscr{D}$ contains the class of cylinder sets.
Since we can easily verify that $\mathscr{D}$ is a Dynkin class,
Dynkin's theorem assures that $\mathscr{D} = \mathscr{B}_1 \times \mathscr{B}_2$,
proving the assertion (i).

To show (ii),
take a CP instrument $\mathcal{I}^2_\cdot (\cdot)$ with the outcome space 
$(\Omega_2, \mathscr{B}_2)$
such that 
$\mathcal{I}^2_B (I) = E^2 (B)$ $(B \in \mathscr{B}_2)$.
Then a POVM $E^{12} (\cdot) := (\mathcal{I}^1 \ast \mathcal{I}^2)_\cdot (I)$
satisfies the condition~\eqref{eq:th12}.
The uniqueness can be shown by using the Dynkin's theorem
as parallel as in the classical measure.
\end{proof}

Now let $\mathcal{I}^k_\cdot (\cdot)$ be a CP instrument
with a standard Borel outcome space $(\Omega_k , \mathscr{B}_k)$
$(k = 1,2, \cdots)$
and let $E$ be a POVM with a standard Borel outcome space $(\Omega , \mathscr{B})$.
Then we have the following associative laws for the composition:
\begin{gather*}
	(
	\mathcal{I}^1
	\ast
	\mathcal{I}^2
	)
	\ast 
	\mathcal{I}^3
	=
	\mathcal{I}^1
	\ast
	(
	\mathcal{I}^2
	\ast
	\mathcal{I}^3
	),
	\\
	(
	\mathcal{I}^1
	\ast
	\mathcal{I}^2
	)
	\ast 
	E
	=
	\mathcal{I}^1
	\ast
	(
	\mathcal{I}^2
	\ast
	E
	).
\end{gather*}
Thus we may write these CP instrument and POVM as
$
	\mathcal{I}^1
	\ast
	\mathcal{I}^2
	\ast 
	\mathcal{I}^3
$
and
$
	\mathcal{I}^1
	\ast
	\mathcal{I}^2
	\ast
	E ,
$
respectively.
Multiple compositions
$
	\mathcal{I}^1
	\ast 
	\cdots
	\ast
	\mathcal{I}^n
$
and
$
	\mathcal{I}^1
	\ast 
	\cdots
	\ast
	\mathcal{I}^n
	\ast
	E
$
for general $n \geq 1$
are also defined in a similar manner.
These are the unique CP instrument and POVM
such that
\begin{gather*}
	(
	\mathcal{I}^1
	\ast 
	\cdots
	\ast
	\mathcal{I}^n
	)_{B_1 \times \cdots \times B_n}
	(\cdot)
	=
	\mathcal{I}^1_{B_1}
	\circ \cdots \circ
	\mathcal{I}^n_{B_n} 
	(\cdot),
	\\
	(
	\mathcal{I}^1
	\ast 
	\cdots
	\ast
	\mathcal{I}^n
	\ast
	E
	)
	(B_1 \times \cdots \times B_n \times B)
	=
	\mathcal{I}^1_{B_1}
	\circ \cdots \circ
	\mathcal{I}^n_{B_n} 
	(E(B))
\end{gather*}
for each $B_k \in \mathscr{B}_k$ $(k=1, \cdots , n)$
and $B \in \mathscr{B}$.

For later use, we show the following lemmas.
\begin{lemm}
\label{lemm:koukan}
Let $\mathcal{I}^1_\cdot (\cdot)$ be a CP instrument with 
a standard Borel outcome space $(\Omega_1 , \mathscr{B}_1)$
and let $E^2$ be a POVM with a standard Borel outcome space $(\Omega_2, \mathscr{B}_2)$.
Then for 
each $B_1 \in \mathscr{B}_1$ and 
for each bounded complex valued $\mathscr{B}_2$-measurable function $f$,
\begin{equation}
	\int_{\Omega_1 \times \Omega_2}
	\chi_{B_1} (\omega_1)
	f(\omega_2) 
	(\mathcal{I}^1 \ast E^2 )
	(d \omega_1 \times d \omega_2)
	=
	\mathcal{I}^1_{B_1}
	\left(
	\int_{\Omega_2}
	f(\omega_2)
	E^2 (d \omega_2)
	\right) .
	\label{eq:lem2}
\end{equation}
\end{lemm}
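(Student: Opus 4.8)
The plan is to establish \eqref{eq:lem2} first for indicator functions $f = \chi_{B_2}$ $(B_2 \in \mathscr{B}_2)$, then to extend it by linearity to $\mathscr{B}_2$-simple functions, and finally to pass to an arbitrary bounded measurable $f$ by uniform approximation. For $f = \chi_{B_2}$ the left-hand side of \eqref{eq:lem2} is $\int_{\Omega_1 \times \Omega_2} \chi_{B_1 \times B_2}\, d(\mathcal{I}^1 \ast E^2) = (\mathcal{I}^1 \ast E^2)(B_1 \times B_2)$, which by the defining relation \eqref{eq:th12} equals $\mathcal{I}^1_{B_1}(E^2(B_2)) = \mathcal{I}^1_{B_1}\bigl( \int_{\Omega_2} \chi_{B_2}\, dE^2 \bigr)$, so \eqref{eq:lem2} holds in this case. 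Since the operator-valued integral $g \mapsto \int g\, dF$ against a POVM $F$ is linear in $g$, and since $\mathcal{I}^1_{B_1}$ is linear, \eqref{eq:lem2} then holds for every finite linear combination of indicators, i.e.\ for every $\mathscr{B}_2$-simple function $f$.

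For the passage to general $f$ I would use that both sides of \eqref{eq:lem2}, viewed as maps from the bounded complex $\mathscr{B}_2$-measurable functions with the supremum norm $\| \cdot \|_\infty$ into $\mathcal{L}(\mathcal{H})$, are Lipschitz continuous. The key estimate is that for any POVM $F$ and any bounded measurable $g$ one has $\| \int g\, dF \| \le 2 \| g \|_\infty$: for each $\psi , \phi \in \mathcal{H}$ the set function $\langle \psi | F(\cdot) | \phi \rangle$ is a complex measure of finite total variation (bounded by $\| \psi \|^2 + \| \phi \|^2$ via polarization from the positive-measure case), and $\langle \psi | \int g\, dF | \phi \rangle = \int g\, d\langle \psi | F(\cdot) | \phi \rangle$, whence the norm bound after optimizing over the rescaling $\psi \mapsto t \psi$, $\phi \mapsto t^{-1} \phi$. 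Applying this with $F = \mathcal{I}^1 \ast E^2$, which is a POVM by Theorem~\ref{theo:comp}(ii), and $g$ the function $(\omega_1 , \omega_2) \mapsto \chi_{B_1}(\omega_1)(f(\omega_2) - f'(\omega_2))$ shows the left-hand side of \eqref{eq:lem2} is Lipschitz in $f$; applying it with $F = E^2$ gives $\| \int (f - f')\, dE^2 \| \le 2 \| f - f' \|_\infty$, and composing with the bounded linear map $\mathcal{I}^1_{B_1}$ shows the right-hand side is Lipschitz in $f$ as well.

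Since every bounded $\mathscr{B}_2$-measurable $f$ is the uniform limit of simple functions, and \eqref{eq:lem2} holds for the latter while both sides are $\| \cdot \|_\infty$-continuous, \eqref{eq:lem2} holds for all such $f$, which is the assertion. I expect the only genuinely delicate point to be the rigorous handling of the operator-valued integrals: one should either fix the meaning of $\int g\, dF$ through the bounded sesquilinear form $(\psi , \phi) \mapsto \int g\, d\langle \psi | F(\cdot) | \phi \rangle$ and verify the norm bound as above, or, equivalently, carry out the whole argument at the level of matrix elements $\langle \psi | \cdot | \phi \rangle$, reducing everything to scalar integrals against complex measures of finite variation, where the Dynkin/uniform-approximation machinery is entirely standard. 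The algebraic steps---the indicator case, linearity, and the density argument---are routine.
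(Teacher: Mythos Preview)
Your proof is correct, but the route differs from the paper's. The paper first reduces to $f\ge 0$, verifies \eqref{eq:lem2} for simple functions, then chooses a monotone sequence of nonnegative simple functions $f_n\uparrow f$ and passes to the limit on each side separately: dominated convergence handles the left-hand side and the $E^2$-integral, while \emph{normality} of $\mathcal{I}^1_{B_1}$ (rather than mere boundedness) is invoked to push the limit through on the right. Your argument instead works in the sup-norm: you show both sides are $\|\cdot\|_\infty$-Lipschitz via the total-variation bound on the scalar measures $\langle\psi|F(\cdot)|\phi\rangle$, and then use uniform approximation by simple functions. The paper's approach is slightly shorter and leans on the order-theoretic structure already in play (monotone limits, normal maps); yours is a bit more elementary in that it only needs $\mathcal{I}^1_{B_1}$ to be bounded, and it sidesteps the decomposition of $f$ into positive parts. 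Either way the indicator case and the extension to simple functions are identical; only the final limiting step differs.
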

\begin{proof}
It is sufficient to show Eq.~\eqref{eq:lem2}
when $f \geq 0.$
When $f$ is a measurable simple function,
Eq.~\eqref{eq:lem2} holds.
For general $f$, take a monotone sequence of non-negative measurable simple functions $f_n$
such that $f_n (\omega_2) \uparrow f(\omega_2)$
for each $\omega_2 \in \Omega_2$.
Then from the dominated convergence theorem we have
\begin{gather}
	\int_{\Omega_1 \times \Omega_2}
	\chi_{B_1} (\omega_1)
	f_n(\omega_2) 
	(\mathcal{I}^1 \ast E^2 )
	(d \omega_1 \times d \omega_2)
	\uparrow
	\int_{\Omega_1 \times \Omega_2}
	\chi_{B_1} (\omega_1)
	f(\omega_2) 
	(\mathcal{I}^1 \ast E^2 )
	(d \omega_1 \times d \omega_2) ,
	\label{eq:lem2-1}
	\\
	\int_{\Omega_2}
	f_n(\omega_2)
	E^2 (d \omega_2)
	\uparrow
	\int_{\Omega_2}
	f(\omega_2)
	E^2 (d \omega_2) .
	\label{eq:lem2-2}
\end{gather}
Since $\mathcal{I}^1_{B_1}(\cdot)$ is normal,
Eq.~\eqref{eq:lem2-2} implies that
\begin{equation}
	\mathcal{I}^1_{B_1}
	\left(
	\int_{\Omega_2}
	f_n(\omega_2)
	E^2 (d \omega_2)
	\right)
	\uparrow
	\mathcal{I}^1_{B_1}
	\left(
	\int_{\Omega_2}
	f(\omega_2)
	E^2 (d \omega_2)
	\right).
	\label{eq:lem2-3}
\end{equation}
Since LHSs of Eqs.~\eqref{eq:lem2-1} and \eqref{eq:lem2-3} coincide,
we obtain Eq.~\eqref{eq:lem2}.
\end{proof}
\begin{lemm}
\label{lemm:relation}
Let $\mathcal{I}^1_\cdot (\cdot)$ be a CP instrument with 
a standard Borel outcome space $(\Omega_1 , \mathscr{B}_1)$
and let $E^2$ and $E^3$ be POVMs with  standard Borel outcome spaces 
$(\Omega_2, \mathscr{B}_2)$
and 
$(\Omega_3, \mathscr{B}_3)$,
respectively.
Then we have
\begin{enumerate}
\item[(i)]
if $E^2 \preceq E^3$,
then
$\mathcal{I}^1 \ast E^2 \preceq  \mathcal{I}^1 \ast E^3 ;$
\item[(ii)]
if $E^2 \simeq E^3$,
then
$\mathcal{I}^1 \ast E^2 \simeq  \mathcal{I}^1 \ast E^3 .$
\end{enumerate}
\end{lemm}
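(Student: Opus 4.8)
The plan is to reduce part (ii) to part (i) and to prove part (i) by writing down an explicit Markov kernel built from the one witnessing $E^2 \preceq E^3$, one that leaves the first register $\Omega_1$ untouched and acts on the second register exactly as the given kernel does. So suppose $E^2 \preceq E^3$ and fix a Markov kernel $\nu_\cdot(\cdot)\colon \Omega_3 \times \mathscr{B}_2 \to [0,1]$ with $E^2(B) = \int_{\Omega_3} \nu_{\omega_3}(B)\, E^3(d\omega_3)$ for all $B \in \mathscr{B}_2$. Applying Lemma~\ref{lemm:kernel}(2) (with the three spaces there taken to be $\Omega_1, \Omega_2, \Omega_3$ and the kernel taken to be $\nu$) yields a Markov kernel $\tilde{\nu}_\cdot(\cdot)\colon (\Omega_1 \times \Omega_3) \times (\mathscr{B}_1 \times \mathscr{B}_2) \to [0,1]$, $\tilde{\nu}_{(\omega_1,\omega_3)}(B) = \nu_{\omega_3}(B|_{\omega_1})$. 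I claim that $\tilde{\nu}$ witnesses $\mathcal{I}^1 \ast E^2 \preceq \mathcal{I}^1 \ast E^3$, i.e. that $(\mathcal{I}^1 \ast E^2)(B) = \int_{\Omega_1 \times \Omega_3} \tilde{\nu}_{(\omega_1,\omega_3)}(B)\,(\mathcal{I}^1 \ast E^3)(d\omega_1 \times d\omega_3)$ for every $B \in \mathscr{B}_1 \times \mathscr{B}_2$.

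The next step is to verify this identity on the $\pi$-system of cylinder sets $B = B_1 \times B_2$, where it becomes an explicit computation. Since $(B_1 \times B_2)|_{\omega_1}$ equals $B_2$ if $\omega_1 \in B_1$ and is empty otherwise, one has $\tilde{\nu}_{(\omega_1,\omega_3)}(B_1 \times B_2) = \chi_{B_1}(\omega_1)\,\nu_{\omega_3}(B_2)$. The function $\omega_3 \mapsto \nu_{\omega_3}(B_2)$ is bounded and $\mathscr{B}_3$-measurable, so Lemma~\ref{lemm:koukan} (applied to $\mathcal{I}^1$, the POVM $E^3$, and this function) gives $\int_{\Omega_1 \times \Omega_3} \chi_{B_1}(\omega_1)\,\nu_{\omega_3}(B_2)\,(\mathcal{I}^1 \ast E^3)(d\omega_1 \times d\omega_3) = \mathcal{I}^1_{B_1}\bigl( \int_{\Omega_3} \nu_{\omega_3}(B_2)\, E^3(d\omega_3) \bigr) = \mathcal{I}^1_{B_1}(E^2(B_2))$, where the last equality is Eq.~\eqref{eq:fuzzy}; and $\mathcal{I}^1_{B_1}(E^2(B_2)) = (\mathcal{I}^1 \ast E^2)(B_1 \times B_2)$ by Eq.~\eqref{eq:th12}. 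Hence the two sides of the claimed identity agree on all cylinder sets.

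It remains to upgrade this to all of $\mathscr{B}_1 \times \mathscr{B}_2$. Both sides of the identity are POVMs on $(\Omega_1 \times \Omega_2, \mathscr{B}_1 \times \mathscr{B}_2)$: the left-hand side by Theorem~\ref{theo:comp}, and the right-hand side because $\tilde{\nu}$ is a Markov kernel, so positivity is clear, normalization follows from $\tilde{\nu}_{(\omega_1,\omega_3)}(\Omega_1 \times \Omega_2) = \nu_{\omega_3}(\Omega_2) = 1$ together with $(\mathcal{I}^1 \ast E^3)(\Omega_1 \times \Omega_3) = I$, and weak countable additivity follows by testing against a vector $\psi$, using that $B \mapsto \tilde{\nu}_{(\omega_1,\omega_3)}(B)$ is a probability measure, and applying the monotone convergence theorem to the finite scalar measure $\langle \psi | (\mathcal{I}^1 \ast E^3)(\cdot) | \psi \rangle$. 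Two POVMs that agree on the $\pi$-system of cylinder sets generating $\mathscr{B}_1 \times \mathscr{B}_2$ coincide, by the same Dynkin-class argument used in the proof of Theorem~\ref{theo:comp} (carried out on each quadratic form $\langle \psi | \cdot | \psi \rangle$ and then polarized). This proves part (i). For part (ii), $E^2 \simeq E^3$ means $E^2 \preceq E^3$ and $E^3 \preceq E^2$, so part (i) gives $\mathcal{I}^1 \ast E^2 \preceq \mathcal{I}^1 \ast E^3$ and $\mathcal{I}^1 \ast E^3 \preceq \mathcal{I}^1 \ast E^2$, i.e. $\mathcal{I}^1 \ast E^2 \simeq \mathcal{I}^1 \ast E^3$. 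The only point that needs care — the rest being immediate once Lemmas~\ref{lemm:kernel} and~\ref{lemm:koukan} are available — is confirming that the right-hand side is genuinely a POVM and that the uniqueness-by-generators step is legitimate in the operator-valued setting; I do not expect any serious obstacle there.
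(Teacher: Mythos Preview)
Your proof is correct and follows essentially the same route as the paper: build the kernel $\tilde{\nu}_{(\omega_1,\omega_3)}(B)=\nu_{\omega_3}(B|_{\omega_1})$ via Lemma~\ref{lemm:kernel}, check on cylinder sets using Lemma~\ref{lemm:koukan} that the resulting POVM coincides with $\mathcal{I}^1\ast E^2$, and deduce (ii) from (i). The only cosmetic difference is that the paper names the integrated POVM $E^{12}$ and appeals to the uniqueness part of Theorem~\ref{theo:comp} directly, whereas you spell out the Dynkin argument and the POVM verification by hand.
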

\begin{proof}
We first show (i).
From the assumption $E^2 \preceq E^3$
there exists a Markov kernel 
$
\nu_\cdot (\cdot) 
\colon \Omega_3 \times \mathscr{B}_2
\to
[0,1]
$
such that
\begin{equation*}
	E^2 (B_2)
	=
	\int_{\Omega_3}
	\nu_{\omega_3} (B_2)
	E^3 (d\omega_3)
	\quad
	(B_2 \in \mathscr{B}_2).
\end{equation*}
From Lemma~\ref{lemm:kernel},
we can define a POVM $E^{12}$ 
by
\[
	E^{12} (B)
	:=
	\int_{\Omega_1 \times \Omega_3}
	\nu_{\omega_3}
	(B \rvert_{\omega_1})
	(\mathcal{I}^1 \ast E^3) (d\omega_1\times d \omega_3 )
	\quad
	(B \in \mathscr{B}_1 \times \mathscr{B}_2).
\]
From the definition of $E^{12}$, 
$E^{12} \preceq \mathcal{I}^1 \ast E^3$ holds.
On the other hand,
for each 
$B_1 \in \mathscr{B}_1$ 
and
$B_2 \in \mathscr{B}_2$,
we have
\begin{align*}
	E^{12} (B_1 \times B_2)
	&=
	\int_{\Omega_1 \times \Omega_3}
	\chi_{B_1} (\omega_1)
	\nu_{\omega_3}
	(B_2)
	(\mathcal{I}^1 \ast E^3) (d\omega_1\times d \omega_3 )
	\\
	&=
	\mathcal{I}^1_{B_1}
	\left(
	\int_{\Omega_3}
	\nu_{\omega_3} (B_2)
	E^3(d\omega_3)
	\right)
	\\
	&= 
	\mathcal{I}^1_{B_1} (E^2 (B_2))
	=
	(\mathcal{I}^1 \ast E^2) (B_1 \times B_2),
\end{align*}
where 
we have used Lemma~\ref{lemm:koukan}
in the derivation of the second equality.
Therefore we obtain
$\mathcal{I}^1 \ast E^2 = E^{12} \preceq \mathcal{I}^1 \ast E^3$
and (i) is proved.
(ii) immediately follows from (i) and the definition of 
the equivalence relation $\simeq$.
\end{proof}

\section{Infinite composition of an instrument and its minimal information-conserving property}
\label{sec:3}
\begin{defi}
\label{defi:cons}
Let $\mathcal{I}_\cdot (\cdot)$ be a standard Borel CP instrument
and let $E $ be a standard Borel POVM.
We say that
$E$ is conserved by $\mathcal{I}$,
or $\mathcal{I}$ conserves $E$,
if $\mathcal{I} \ast E \simeq E$.
\end{defi}
This condition is essentially the same as the one 
obtained in the author\rq{}s previous work~\cite{PhysRevA.91.032110}
for a sufficient condition for the \lq\lq{relative-entropy conservation law}\rq\rq{}
which is a generalization of the \lq\lq{Shannon-entropy conservation law}\rq\rq derived by
Ban~\cite{0305-4470-32-9-012}.

From Definition~\ref{defi:cons} and Lemma~\ref{lemm:relation},
we immediately obtain the following theorem.
\begin{theo}
\label{theo:independence}
Let $\mathcal{I}_\cdot (\cdot)$ be a standard Borel CP instrument
and let $E^{1} $ and $E^{2} $ be standard Borel POVMs.
Suppose that $E^{1} \simeq E^{2}.$
Then $\mathcal{I}_\cdot (\cdot)$ conserves $E^{1}$
if and only if
$\mathcal{I}_\cdot (\cdot)$ conserves $E^{2}.$
In other words, the conservation by $\mathcal{I}_\cdot (\cdot)$
is well-defined to $\simeq$-equivalence classes of standard Borel POVMs.
\end{theo}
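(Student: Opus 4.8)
The plan is to derive the statement directly from Lemma~\ref{lemm:relation}(ii) together with the fact, recalled above, that $\simeq$ is an equivalence relation on POVMs. Since the claim is symmetric in $E^{1}$ and $E^{2}$, it suffices to prove one implication: assuming $E^{1} \simeq E^{2}$ and that $\mathcal{I}$ conserves $E^{1}$, I would show that $\mathcal{I}$ conserves $E^{2}$.

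First I would unpack Definition~\ref{defi:cons}: the hypothesis that $\mathcal{I}$ conserves $E^{1}$ means $\mathcal{I} \ast E^{1} \simeq E^{1}$. Next, applying Lemma~\ref{lemm:relation}(ii) with the CP instrument taken to be $\mathcal{I}$ and with $E^{1}$, $E^{2}$ in the roles of the two POVMs, the hypothesis $E^{1} \simeq E^{2}$ yields $\mathcal{I} \ast E^{1} \simeq \mathcal{I} \ast E^{2}$. Then I would chain these equivalences using symmetry and transitivity of $\simeq$:
\[
	\mathcal{I} \ast E^{2} \simeq \mathcal{I} \ast E^{1} \simeq E^{1} \simeq E^{2},
\]
so that $\mathcal{I} \ast E^{2} \simeq E^{2}$, which is precisely the assertion that $\mathcal{I}$ conserves $E^{2}$. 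The reverse implication follows by interchanging the roles of $E^{1}$ and $E^{2}$, and the closing remark that conservation is well-defined on $\simeq$-equivalence classes is merely a restatement of these two implications.

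There is essentially no hard step here; the only points needing care are bookkeeping ones --- matching the POVMs of Lemma~\ref{lemm:relation} to the correct slots, and appealing to the already-established fact (cited from Refs.~\onlinecite{Heinonen200577,jencova2008}) that $\preceq$ is a preorder and $\simeq$ the induced equivalence relation, so that the displayed chain of $\simeq$'s is legitimate. In particular, no fresh use of the standard Borel hypothesis or of Theorem~\ref{theo:comp}, beyond what is already built into the well-definedness of the composition $\mathcal{I} \ast E$, is required.
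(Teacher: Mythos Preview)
Your argument is correct and is exactly the approach the paper takes: the theorem is stated as an immediate consequence of Definition~\ref{defi:cons} and Lemma~\ref{lemm:relation}, and you have simply written out that deduction explicitly using Lemma~\ref{lemm:relation}(ii) together with the transitivity and symmetry of $\simeq$.
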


Now we ask whether there exists a standard Borel POVM
$E$
that is conserved by a given standard Borel CP instrument $\mathcal{I}_\cdot (\cdot)$.
The answer is affirmative, and it is given by a POVM called the
\textit{infinite composition} of $\mathcal{I}$
corresponding to an infinite successive measurement of $\mathcal{I}$.
The conservation of the infinite composition $E_\infty$ by a CP instrument $\mathcal{I} $
is intuitively understood as follows:
the composition $\mathcal{I} \ast E_\infty$ is a measurement process 
in which we first perform $\mathcal{I}$ and then perform $\mathcal{I}$ infinitely many times,
which is obviously equivalent to performing $E_\infty .$
Furthermore,
we will show in Theorem~\ref{theo:minimal} that
the infinite composition $E_\infty$ of $\mathcal{I}$ 
is special to $\mathcal{I}$ in the sense that
$E_\infty$ is the minimal element with respect to the preorder relation $\preceq$
among POVMs conserved by $\mathcal{I}.$

The following proposition due to Tumulka~\cite{s11005-008-0229-8}
is a key to the construction of the infinite composition, 
which is a POVM version of the celebrated Kolmogorov extension theorem for probability 
measures~\cite{kolmogorov1933}.

\begin{prop}[quantum Kolmogorov extension theorem]
\label{prop:kolmogorov}
Let $(\Omega_i , \mathscr{B}_i)$ $(i=1,2, \cdots)$
be a standard Borel space
and let $E_n$ $(n=1,2, \cdots )$ be a POVM
with the product outcome space 
$(\prod_{i=1}^n \Omega_i ,  \prod_{i=1}^n \mathscr{B}_i )$.
Suppose that $\{  E_n \}$
satisfies the condition
\begin{equation}
	E_n (B)
	=
	E_{n+1} (B \times \Omega_{n+1})
	\quad 
	\left(
	n\geq 1, 
	B \in \prod_{i=1}^n \mathscr{B}_i
	\right).
	\label{eq:kolmogorov1}
\end{equation}
Then there exists a unique POVM $E_\infty$ with the infinite product outcome space 
$(\prod_{i=1}^\infty \Omega_i ,  \prod_{i=1}^\infty \mathscr{B}_i )$
such that
\begin{equation}
	E_n (B)
	=
	E_{\infty} 
	\left(
	B \times 
	\prod_{i=n+1}^\infty\Omega_{i}
	\right)
	\quad 
	\left(
	n\geq 1, 
	B \in \prod_{i=1}^n \mathscr{B}_i
	\right).
	\label{eq:kolmogorov2}
\end{equation}
\end{prop}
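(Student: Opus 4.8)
The plan is to deduce this from the classical Kolmogorov extension theorem~\cite{kolmogorov1933} by testing the POVMs against vector states and then reassembling the resulting scalar measures into an operator-valued measure by a Dynkin-system argument. First I would fix $\psi \in \mathcal{H}$ and set $\mu^\psi_n(B) := \langle \psi | E_n(B) | \psi \rangle$ for $B \in \prod_{i=1}^n \mathscr{B}_i$; this is a finite measure of total mass $\|\psi\|^2$, and the compatibility condition~\eqref{eq:kolmogorov1} becomes $\mu^\psi_n(B) = \mu^\psi_{n+1}(B \times \Omega_{n+1})$. Since each $(\Omega_i , \mathscr{B}_i)$ is standard Borel, the classical Kolmogorov extension theorem produces a unique measure $\mu^\psi_\infty$ on $(\prod_{i=1}^\infty \Omega_i , \prod_{i=1}^\infty \mathscr{B}_i)$, again of total mass $\|\psi\|^2$, with $\mu^\psi_\infty(B \times \prod_{i > n} \Omega_i) = \mu^\psi_n(B)$ for all $n$ and all $B \in \prod_{i=1}^n \mathscr{B}_i$.

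Next I would glue the family $\{ \mu^\psi_\infty \}_{\psi \in \mathcal{H}}$ into a POVM. Let $\mathscr{D}$ be the collection of $B \in \prod_{i=1}^\infty \mathscr{B}_i$ such that $\psi \mapsto \mu^\psi_\infty(B)$ is the quadratic form of a bounded operator, i.e.\ there is a (necessarily unique, self-adjoint, and satisfying $O \leq E_\infty(B) \leq I$) operator $E_\infty(B)$ with $\mu^\psi_\infty(B) = \langle \psi | E_\infty(B) | \psi \rangle$ for all $\psi$; uniqueness, self-adjointness and the operator bounds follow from polarization and from $0 \leq \mu^\psi_\infty(B) \leq \|\psi\|^2$. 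Then $\mathscr{D}$ is a Dynkin system: it contains the whole space (with $E_\infty = I$); it is closed under complements since $\mu^\psi_\infty(B^c) = \langle \psi | (I - E_\infty(B)) | \psi \rangle$; and it is closed under countable disjoint unions, because for pairwise disjoint $B_k \in \mathscr{D}$ the partial sums $\sum_{k=1}^N E_\infty(B_k)$ form an increasing sequence of positive operators bounded above by $I$ (as $\langle \psi | \sum_{k=1}^N E_\infty(B_k) | \psi \rangle = \mu^\psi_\infty(\cup_{k=1}^N B_k) \leq \|\psi\|^2$), hence converge strongly to an operator whose quadratic form equals, by $\sigma$-additivity of $\mu^\psi_\infty$, the value $\mu^\psi_\infty(\cup_k B_k)$. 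Moreover $\mathscr{D}$ contains every cylinder set $B \times \prod_{i>n}\Omega_i$ with $B \in \prod_{i=1}^n \mathscr{B}_i$, with $E_\infty(B \times \prod_{i>n}\Omega_i) = E_n(B)$, since $\mu^\psi_\infty(B \times \prod_{i>n}\Omega_i) = \mu^\psi_n(B) = \langle \psi | E_n(B) | \psi \rangle$. As the cylinder sets form a $\pi$-system generating $\prod_{i=1}^\infty \mathscr{B}_i$, Dynkin's theorem gives $\mathscr{D} = \prod_{i=1}^\infty \mathscr{B}_i$, so $E_\infty(B)$ is defined for every measurable $B$.

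Finally I would record that $E_\infty$ is the desired POVM: positivity and $E_\infty(B) \leq I$ are built into the definition of $\mathscr{D}$; $E_\infty(\prod_i \Omega_i) = I$ because $\mu^\psi_\infty$ has total mass $\|\psi\|^2 = \langle\psi|I|\psi\rangle$; $\sigma$-additivity in the weak (indeed strong) operator topology is precisely the disjoint-union step above; and~\eqref{eq:kolmogorov2} is the cylinder-set identity already observed. Uniqueness follows because a second POVM $E'_\infty$ satisfying~\eqref{eq:kolmogorov2} would, for each $\psi$, give a finite measure $\langle\psi|E'_\infty(\cdot)|\psi\rangle$ coinciding with $\mu^\psi_\infty$ on the generating $\pi$-system of cylinder sets, hence everywhere, and polarization then yields $E'_\infty = E_\infty$. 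The substantive inputs are just two: the classical Kolmogorov extension theorem, the only place the standard Borel hypothesis enters, through inner regularity of finite Borel measures on Polish spaces; and the observation that ``$\mu^\psi_\infty(\cdot)$ is a bounded quadratic form in $\psi$'' is stable under the Dynkin operations, which is what lets the scalarwise extensions be glued into a single operator-valued measure. The argument is in essence that of Tumulka~\cite{s11005-008-0229-8}, and I expect this gluing — getting a genuine operator $E_\infty(B)$ for \emph{every} measurable $B$, not merely for cylinders — rather than the scalar extension to be the only step requiring care.
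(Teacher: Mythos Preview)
The paper does not prove this proposition at all: it is stated as a result ``due to Tumulka'' and simply cited~\cite{s11005-008-0229-8}, so there is no in-paper proof to compare against. Your argument is correct and is essentially Tumulka's proof---reduce to the classical Kolmogorov theorem via the scalar measures $\mu^\psi_n(\cdot)=\langle\psi|E_n(\cdot)|\psi\rangle$, then use a Dynkin/$\pi$--$\lambda$ argument to show that ``$\psi\mapsto\mu^\psi_\infty(B)$ is the quadratic form of an operator in $[O,I]$'' propagates from cylinder sets to all of $\prod_i\mathscr{B}_i$---exactly as you acknowledge in your final paragraph.
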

The condition~\eqref{eq:kolmogorov1} is called a \textit{Kolmogorov consistency condition}
and the POVM $E_\infty$
satisfying \eqref{eq:kolmogorov2} is said to be \textit{consistent} with
$\{  E_n \}$.
Note that the consistent POVM $E_\infty$
is a standard Borel POVM since
a countable product of standard Borel spaces is also a standard Borel space.

For simplicity, if $(\Omega_i , \mathscr{B}_i)$ $(1 \leq i \leq n)$
is identical to $(\Omega, \mathscr{B})$,
the product space 
$(\prod_{i=1}^n \Omega_i ,  \prod_{i=1}^n \mathscr{B}_i )$
$(1 \leq n \leq \infty)$
is denoted as $( \Omega^n , \mathscr{B}^n )$.
We also denote $\underbrace{ \mathcal{I} \ast \cdots \ast \mathcal{I} }_{  \text{$n$ elements}  }$
as $\mathcal{I}^{\ast n}$ for a CP instrument $\mathcal{I}$.

Now we construct the infinite composition of a CP instrument.

\begin{theo}[infinite composition of an instrument]
\label{theo:infinite}
Let $\mathcal{I}_\cdot (\cdot )$ be a CP instrument 
with a standard Borel outcome space $(\Omega, \mathscr{B})$.
Then there exists a unique POVM 
$E_\infty$
with the infinite product outcome space 
$(\Omega^\infty, \mathscr{B}^\infty)$
such that
\begin{equation}
	E_\infty
	\left(
	\prod_{i=1}^n B_i \times \Omega^\infty
	\right)
	=
	\mathcal{I}_{B_1}
	\circ \cdots \circ
	\mathcal{I}_{B_n}
	(I)
	\label{eq:infinite}
\end{equation}
for each $n \geq 1$ and $B_i \in \mathscr{B}$ 
$(i = 1,\cdots , n).$
The POVM $E_\infty$ is called an infinite composition of $\mathcal{I} .$
\end{theo}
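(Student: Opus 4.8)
The plan is to realize $E_\infty$ as the quantum Kolmogorov extension of the sequence of finite composition POVMs. First I would, for each $n \geq 1$, iterate Theorem~\ref{theo:comp}(i) to obtain the CP instrument $\mathcal{I}^{\ast n}$ with outcome space $(\Omega^n, \mathscr{B}^n)$, and then set
\[
	E_n (B) := (\mathcal{I}^{\ast n})_B (I) \quad (B \in \mathscr{B}^n),
\]
which is a POVM on $(\Omega^n,\mathscr{B}^n)$ (positivity and countable additivity in the weak operator topology come from those of the instrument $\mathcal{I}^{\ast n}$, and $E_n(\Omega^n) = \mathcal{I}^{\ast n}_{\Omega^n}(I) = I$). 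On a cylinder set this gives $E_n(\prod_{i=1}^n B_i) = \mathcal{I}_{B_1}\circ\cdots\circ\mathcal{I}_{B_n}(I)$.

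Second, I would verify the Kolmogorov consistency condition~\eqref{eq:kolmogorov1}, i.e.\ $E_n(B) = E_{n+1}(B\times\Omega)$ for all $B \in \mathscr{B}^n$. The set function $B \mapsto E_{n+1}(B\times\Omega)$ is again a POVM on $(\Omega^n,\mathscr{B}^n)$: it is positive, it sends $\Omega^n$ to $E_{n+1}(\Omega^{n+1}) = I$, and it is countably additive because $\{B_k\times\Omega\}_k$ is disjoint in $\mathscr{B}^{n+1}$ whenever $\{B_k\}_k$ is disjoint in $\mathscr{B}^n$ and $(\cup_k B_k)\times\Omega = \cup_k (B_k\times\Omega)$. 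On a cylinder set it equals $\mathcal{I}_{B_1}\circ\cdots\circ\mathcal{I}_{B_n}\circ\mathcal{I}_\Omega(I) = \mathcal{I}_{B_1}\circ\cdots\circ\mathcal{I}_{B_n}(I) = E_n(\prod_{i=1}^n B_i)$, using only the normalization $\mathcal{I}_\Omega(I)=I$. Hence $E_{n+1}(\cdot\times\Omega)$ and $E_n(\cdot)$ agree on the $\pi$-system of cylinder sets, which generates $\mathscr{B}^n$; since for every normal state $\rho$ the induced finite measures $\rho(E_{n+1}(\cdot\times\Omega))$ and $\rho(E_n(\cdot))$ then agree on this $\pi$-system, a Dynkin-class argument (exactly as in the proof of Theorem~\ref{theo:comp}) forces them to agree on all of $\mathscr{B}^n$, which gives~\eqref{eq:kolmogorov1}.

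Third, I would invoke Proposition~\ref{prop:kolmogorov} to obtain the unique POVM $E_\infty$ on $(\Omega^\infty,\mathscr{B}^\infty)$ consistent with $\{E_n\}$, i.e.\ satisfying~\eqref{eq:kolmogorov2}. Taking $B = \prod_{i=1}^n B_i$ in~\eqref{eq:kolmogorov2} yields precisely~\eqref{eq:infinite}. For the uniqueness statement of the theorem, I would note that the cylinder sets $\prod_{i=1}^n B_i \times \Omega^\infty$ ($n\geq 1$, $B_i\in\mathscr{B}$) form a $\pi$-system generating $\mathscr{B}^\infty$: two POVMs satisfying~\eqref{eq:infinite} induce, for each normal state, finite measures that agree on this $\pi$-system, hence agree on $\mathscr{B}^\infty$, so the two POVMs coincide.

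I expect the only genuine subtlety to be the passage from cylinder sets to all of $\mathscr{B}^n$ in Step two (and, analogously, the uniqueness argument in Step three); everything else is a direct application of Theorem~\ref{theo:comp}, of the associativity of $\ast$, and of Proposition~\ref{prop:kolmogorov}. The mild point worth stating carefully is that $B \mapsto E_{n+1}(B\times\Omega)$ is indeed a genuine POVM, so that the Dynkin reduction to scalar measures is legitimate.
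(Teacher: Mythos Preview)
Your proposal is correct and follows essentially the same strategy as the paper: define $E_n := (\mathcal{I}^{\ast n})_{\cdot}(I)$, verify Kolmogorov consistency, and apply Proposition~\ref{prop:kolmogorov}. The one difference is that the paper's consistency check is more direct---since by Eq.~\eqref{eq:th1} one has $(\mathcal{I}^{\ast n} \ast \mathcal{I})_{B \times \Omega}(I) = (\mathcal{I}^{\ast n})_B(\mathcal{I}_\Omega(I)) = (\mathcal{I}^{\ast n})_B(I) = E_n(B)$ for \emph{every} $B \in \mathscr{B}^n$ (not just rectangles), the Dynkin argument you flag as ``the only genuine subtlety'' is in fact unnecessary.
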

\begin{proof}
Let us define a POVM $E_n (\cdot) := ( \mathcal{I}^{ \ast n})_{\cdot} (I)$
for each $n \geq 1.$
Since $E_n (\prod^n_{i=1} B_i)$ coincides with the RHS of Eq.~\eqref{eq:infinite},
from the quantum Kolmogorov extension theorem,
it is sufficient to show that $\{ E_n \} $
satisfies the Kolmogorov consistency condition.
For each $B \in \mathscr{B}^n$ $(n\geq 1)$,
we have
\begin{align*}
	E_{n+1} (B \times \Omega)
	&=
	(
	\mathcal{I}^{ \ast n} 
	\ast
	\mathcal{I}
	)_{B \times \Omega}
	(I)
	\\
	&=
	\mathcal{I}^{\ast n}_B (\mathcal{I}_{\Omega} (I))
	\\
	&=
	\mathcal{I}^{\ast n}_B (I)
	=E_n (B),
\end{align*}
and the theorem holds.
\end{proof}
The next theorem is the main result of this paper,
which states that for a given CP instrument $\mathcal{I},$ 
the infinite composition of $\mathcal{I}$ is the least informative
POVM conserved by $\mathcal{I} .$
\begin{theo}
\label{theo:minimal}
Let $\mathcal{I}_\cdot (\cdot)$ is a CP instrument with a standard Borel outcome space
$(\Omega, \mathscr{B})$
and let $E_\infty$ be the infinite composition of $\mathcal{I}$.
Then $\mathcal{I}$ conserves $E_\infty$.
Furthermore $E_\infty$ is the minimal element with respect to the preorder relation
$\preceq$
among the standard Borel POVMs conserved by $\mathcal{I}$.
\end{theo}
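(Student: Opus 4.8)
The statement has two parts: (a) $\mathcal{I}\ast E_\infty \simeq E_\infty$, and (b) for every standard Borel POVM $E$ conserved by $\mathcal{I}$, one has $E_\infty \preceq E$. For part (a), the natural strategy is to exploit the product structure: $\mathcal{I}\ast E_\infty$ is a POVM on $(\Omega\times\Omega^\infty,\mathscr{B}\times\mathscr{B}^\infty)=(\Omega^\infty,\mathscr{B}^\infty)$ after the obvious identification $(\omega_1,(\omega_2,\omega_3,\dots))\leftrightarrow(\omega_1,\omega_2,\dots)$. First I would compute, for cylinder sets, using Theorem~\ref{theo:comp}(ii), associativity of $\ast$, and the defining relation \eqref{eq:infinite},
\[
	(\mathcal{I}\ast E_\infty)\!\left(B_1\times\cdots\times B_n\times\Omega^\infty\right)
	=\mathcal{I}_{B_1}\!\left(E_\infty\!\left(B_2\times\cdots\times B_n\times\Omega^\infty\right)\right)
	=\mathcal{I}_{B_1}\circ\cdots\circ\mathcal{I}_{B_n}(I),
\]
which is exactly $E_\infty(B_1\times\cdots\times B_n\times\Omega^\infty)$. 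Since cylinder sets generate $\mathscr{B}^\infty$ and both sides are POVMs agreeing on the $\pi$-system of cylinders, a Dynkin-class argument (as already used repeatedly in the paper) gives $\mathcal{I}\ast E_\infty = E_\infty$ under this identification, hence in particular $\mathcal{I}\ast E_\infty\simeq E_\infty$: the identification itself is a bijective bi-measurable map, which furnishes the required (deterministic) Markov kernels in both directions. So part (a) is essentially bookkeeping.

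For part (b), let $E$ be any standard Borel POVM with $\mathcal{I}\ast E\simeq E$, with outcome space $(\Lambda,\mathscr{C})$. The plan is to show $E_\infty\preceq E$ by induction on $n$: I claim $E_n\preceq E$ for every $n$, where $E_n=(\mathcal{I}^{\ast n})_\cdot(I)$. For $n=1$: since $E(\Lambda)=I$ we have $E_1=\mathcal{I}\ast I$ trivially related, but more to the point I want to use the conservation hypothesis. Iterating $\mathcal{I}\ast E\simeq E$ and Lemma~\ref{lemm:relation} gives $\mathcal{I}^{\ast n}\ast E\simeq \mathcal{I}^{\ast(n-1)}\ast E\simeq\cdots\simeq E$ for all $n$. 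Now $\mathcal{I}^{\ast n}\ast E$ has outcome space $\Omega^n\times\Lambda$, and its marginal on $\Omega^n$ is $(\mathcal{I}^{\ast n}\ast E)(\cdot\times\Lambda)=\mathcal{I}^{\ast n}_\cdot(E(\Lambda))=\mathcal{I}^{\ast n}_\cdot(I)=E_n$. Marginalization is a deterministic classical post-processing (a Markov kernel given by an indicator of a preimage), so $E_n\preceq\mathcal{I}^{\ast n}\ast E\simeq E$, hence $E_n\preceq E$ for every $n$, with the \emph{same} reference POVM $E$.

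The final—and delicate—step is to pass from $\{E_n\preceq E\}_{n\ge1}$ to $E_\infty\preceq E$. For each $n$ there is a Markov kernel $\mu^{(n)}_\cdot(\cdot)\colon\Lambda\times\mathscr{B}^n\to[0,1]$ with $E_n(B)=\int_\Lambda\mu^{(n)}_\lambda(B)\,E(d\lambda)$. The issue is that these kernels need not be consistent in $n$, so one cannot naively glue them into a kernel into $\Omega^\infty$. The resolution I would pursue: for fixed $\lambda$, the probability measures $\mu^{(n)}_\lambda$ on $\Omega^n$ are ``almost'' consistent in the sense that for any bounded $f$ on $\Omega^n$, $\int f\,d\mu^{(n+1)}_\lambda(\cdot,\Omega)$ and $\int f\,d\mu^{(n)}_\lambda$ integrate to the same thing against $E$; i.e. they agree $E$-almost everywhere. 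Since there are only countably many constraints (test functions can be taken from a countable generating algebra of each $\mathscr{B}^n$), one can modify the $\mu^{(n)}_\lambda$ on an $E$-null set so that \emph{genuine} Kolmogorov consistency holds for every $\lambda$ outside a fixed $E$-null set $N$; on $N$ define the kernels arbitrarily (e.g. a fixed point mass). Then the classical Kolmogorov extension theorem, applied pointwise in $\lambda$, yields a probability measure $\mu^{(\infty)}_\lambda$ on $\Omega^\infty$; measurability of $\lambda\mapsto\mu^{(\infty)}_\lambda(C)$ for cylinders $C$ follows from that of the $\mu^{(n)}$, and extends to all of $\mathscr{B}^\infty$ by a Dynkin argument, so $\mu^{(\infty)}$ is a Markov kernel. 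Finally, $\int_\Lambda\mu^{(\infty)}_\lambda(B\times\Omega^\infty)\,E(d\lambda)=\int_\Lambda\mu^{(n)}_\lambda(B)\,E(d\lambda)=E_n(B)=E_\infty(B\times\Omega^\infty)$ for every cylinder (the $N$-modification is invisible under the $E$-integral), and since both $E_\infty$ and $\lambda\mapsto\int\mu^{(\infty)}_\lambda(\cdot)E(d\lambda)$ are POVMs agreeing on cylinders, they coincide; hence $E_\infty\preceq E$. The main obstacle is exactly this last step—turning a family of marginal-compatible-but-not-pointwise-consistent kernels into one honest kernel into the infinite product—and the key technical input is that ``$E$-a.e.'' consistency, involving only countably many measurable constraints, can be upgraded to everywhere consistency by discarding a single $E$-null set.
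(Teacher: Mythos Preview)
Your argument for part (a) is correct and essentially identical to the paper's: identify $\Omega\times\Omega^\infty$ with $\Omega^\infty$, check agreement on cylinders, and invoke uniqueness.

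For part (b), however, there is a genuine gap. Your key step is the assertion that if
\[
\int_\Lambda \mu^{(n+1)}_\lambda(B\times\Omega)\,E(d\lambda)
=\int_\Lambda \mu^{(n)}_\lambda(B)\,E(d\lambda)
\]
as operators, then the two integrands agree $E$-almost everywhere. This inference is valid for scalar probability measures but \emph{fails for POVMs}. A simple counterexample: take $\mathcal{H}=\mathbb{C}$, $\Lambda=\{1,2\}$, $E(\{1\})=E(\{2\})=\tfrac12$, and $g(1)=1$, $g(2)=-1$; then $\int g\,dE=0$ while $g$ is nonzero on a set of full $E$-measure. (The implication \emph{does} hold when the integrand is nonnegative, but your difference $\mu^{(n+1)}_\lambda(B\times\Omega)-\mu^{(n)}_\lambda(B)$ has no sign.) Since you chose each $\mu^{(n)}$ independently, merely witnessing $E_n\preceq E$, there is no mechanism forcing even $E$-a.e.\ consistency, and the gluing step collapses.

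The paper sidesteps this difficulty by never choosing independent kernels for different $n$. Instead it takes the \emph{single} kernel $\tilde\nu^1_x(\cdot)$ on $\mathscr{B}\times\mathscr{B}_X$ witnessing $\mathcal{I}\ast F\preceq F$ and iterates it, defining
\[
\tilde\nu^{n+1}_x(B)=\int_{\Omega^n\times\Omega_X}\tilde\nu^1_{x_n}\bigl(B|_{\bm{\omega}^{(n)}}\bigr)\,\tilde\nu^n_x(d\bm{\omega}^{(n)}\times dx_n).
\]
Because the auxiliary $\Omega_X$-coordinate is carried along at every stage, the marginals $\nu^n_x(\cdot)=\tilde\nu^n_x(\cdot\times\Omega_X)$ satisfy $\nu^{n+1}_x(B\times\Omega)=\nu^n_x(B)$ \emph{identically in $x$}, not merely $E$-a.e. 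Kolmogorov extension then applies pointwise without any null-set surgery. The essential idea you are missing is precisely this: build all finite-level kernels from one fixed transition kernel so that exact consistency is automatic.
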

\begin{proof}
$(\Omega^\infty , \mathscr{B}^\infty)$ and 
$(\Omega \times \Omega^\infty , \mathscr{B} \times \mathscr{B}^\infty)$
are Borel isomorphic by the mapping
\[
	\Omega \times \Omega^\infty
	\ni
	\left(\omega , \prod_{i=1}^\infty \omega_i \right)
	\mapsto
	(\omega , \omega_1 , \omega_2 , \cdots)
	\in
	\Omega^\infty,
\]
and we consider $\mathcal{I} \ast E_\infty$ as a POVM with the outcome space
$(\Omega^\infty , \mathscr{B}^\infty)$
by this identification.
Then for each $B , B_1 , \cdots , B_n \in \mathscr{B}$
($1 \leq  n < \infty$),
we have
\begin{align*}
	(
	\mathcal{I} \ast E_\infty
	)
	(B\times B_1 \times \cdots B_n \times \Omega^\infty)
	&=
	\mathcal{I}_B
	(  E_\infty (  B_1 \times \cdots B_n \times \Omega^\infty  )  )
	\\
	&=
	\mathcal{I}_B
	\circ
	\mathcal{I}_{B_1}
	\circ
	\cdots 
	\circ
	\mathcal{I}_{B_n} (I)
	\\
	&=
	E_\infty
	( B \times B_1 \times \cdots B_n \times \Omega^\infty).
\end{align*}
From the uniqueness of $E_\infty$, we obtain $\mathcal{I} \ast E_\infty =  E_\infty$,
which proves the conservation of $E_\infty$ by $\mathcal{I}$.

To show the minimality of $E_\infty$,
take an arbitrary POVM $F$ with a standard Borel outcome space 
$( \Omega_X , \mathscr{B}_X )$ such that $F \simeq \mathcal{I} \ast F .$
The goal of the proof is to construct a Markov kernel corresponding to 
an information-procession from $\Omega_X$ to $\Omega^\infty $.
Since $  \mathcal{I} \ast F  \preceq   F$,
there exists a Markov kernel
$
\tilde{\nu}^1_\cdot (\cdot)
\colon
\Omega_X \times (\mathscr{B} \times \mathscr{B}_X)
\to
[0,1]
$
such that
\begin{equation}
	( \mathcal{I} \ast F ) (B)
	=
	\int_{\Omega_X}
	\tilde{\nu}^1_{x} (B)
	F (dx) 
	\notag
\end{equation}
for each $B \in \mathscr{B} \times \mathscr{B}_X .$
Since the Markov kernel $\tilde{\nu}_{x}^1 (d\omega_1 \times d x_1)$ 
corresponds to a classical information processing generating
a measurement outcome $(\omega_1, x_1)$ 
of $\mathcal{I} \ast F$
from a given measurement outcome $x$ of $F$,
we can construct a new Markov kernel $\tilde{\nu}^2_x (d \omega_1 \times d \omega_2 \times dx_2)$
which generates the measurement outcome of $\mathcal{I} \ast \mathcal{I} \ast F$ 
by applying the same classical information processing 
$\tilde{\nu}^1_{x_1}  (d \omega_2 \times dx_2)$
to $x_1$ which generates $(\omega_2 , x_2).$ 
Repeating the same discussion, we can construct a sequence 
of Markov kernels
$\tilde{\nu}^n_x (d\omega_1 \times \cdots \times d \omega_n \times dx_n )$
that generates the measurement outcome of $\mathcal{I}^{\ast n} \ast F$
from that of $F.$
The formal definition of
the sequence
$\{ \tilde{\nu}^n_\cdot (\cdot) \}$
is given by
\begin{equation}
	\tilde{\nu}^{n+1}_x (B)
	:=
	\int_{\Omega^n \times \Omega_X}
	\tilde{\nu}^1_{x_n} (B \rvert_{ \bm{\omega}^{(n)}  })
	\tilde{\nu}^n_{x} (d \bm{\omega}^{(n)} \times dx_n )
	\label{eq:main1}
\end{equation}
for each $1 \leq n < \infty$ and each
$B \in \mathscr{B}^{n+1} \times \mathscr{B}_X$,
where we denote $\prod_{i=1}^n \omega_i$ as $\bm{\omega}^{(n)}$
and 
$
	B \rvert_{ \bm{\omega}^{(n)}  }
	:=
	\Set{
	(\omega_{n+1} , x_{n+1}) \in 
	\Omega \times \Omega_X |
	( \bm{\omega}^{(n)} , \omega_{n+1} , x_{n+1} )
	\in
	B
	}
$
$(1 \leq n \leq \infty).$
From Lemma~\ref{lemm:kernel},
$\tilde{\nu}^n_\cdot (\cdot)$ 
defined by Eq.~\eqref{eq:main1} is a well-defined Markov kernel.
Now we show that
\begin{equation}
	(\mathcal{I}^{ \ast n} \ast F  ) (\cdot)
	=
	\int_{\Omega_X}
	\tilde{\nu}^n_{x} (\cdot)
	F(dx)
	\label{eq:main2}
\end{equation}
for each $n \geq 1$.
If $n=1$, Eq.~\eqref{eq:main2} evident from the definition of $\tilde{\nu}^1_\cdot (\cdot)$.
If Eq.~\eqref{eq:main2} holds for $n \geq 1$,
then for each $B_n \in \mathscr{B}^n , B_1 \in \mathscr{B}$, 
and $B_X \in \mathscr{B}_X$,
we obtain
\begin{align}
	&\int_{\Omega_X}
	\tilde{\nu}^{n+1}_x 
	(B_n \times B_1 \times B^X) F (dx)
	\notag
	\\
	&=
	\int_{\Omega_X}
	\left(
	\int_{\Omega^n \times \Omega_X}
	\tilde{\nu}^1_{x_n} 
	(   
	( B_n \times B_1 \times B_X  ) 
	\rvert_{ \bm{\omega}^{(n)}  })
	\tilde{\nu}^n_{x} (d \bm{\omega}^{(n)} \times dx_n )
	\right)
	F (dx)
	\notag
	\\
	&=
	\int_{\Omega_X}
	\left(
	\int_{\Omega^n \times \Omega_X}
	\chi_{B_n} ( \bm{\omega}^{(n)} )
	\tilde{\nu}^1_{x_n} 
	(   
	 B_1 \times B_X  
	)
	\tilde{\nu}^n_{x} (d \bm{\omega}^{(n)} \times dx_n )
	\right)
	F(dx)
	\notag
	\\
	&=
	\int_{\Omega^n \times \Omega_X}
	\chi_{B_n} ( \bm{\omega}^{(n)} )
	\tilde{\nu}^1_{x_n} 
	(   
	 B_1 \times B_X  
	)
	( \mathcal{I}^{ \ast n} \ast F )
	(d \bm{\omega}^{(n)} \times dx_n )
	\label{eq:toch1}
	\\
	&=
	(\mathcal{I}^{\ast n})_{B_n}
	\left(
	\int_{\Omega_X}
	\tilde{\nu}^1_{x_n} 
	(   
	 B_1 \times B_X  
	)
	F (dx_n)
	\right)
	\label{eq:usedlemm2}
	\\
	&=
	( 
	\mathcal{I}^{\ast (n+1)}
	\ast 
	F
	)
	(
	B_n \times B_1 \times B_X
	),
	\notag
\end{align}
where in deriving Eqs.~\eqref{eq:usedlemm2} and \eqref{eq:toch1}
we have used Lemma~\ref{lemm:koukan} and an equality
\begin{align}
	&\int_{\Omega_X}
	\left(
	\int_{\Omega^n \times \Omega_X}
	f (\bm{\omega}^{(n)}  , x_n )
	\tilde{\nu}^n_{x} (d \bm{\omega}^{(n)} \times dx_n )
	\right)
	F(dx)
	\notag 
	\\
	&=
	\int_{\Omega^n \times \Omega_X}
	f (\bm{\omega}^{(n)}  , x_n )
	( \mathcal{I}^{ \ast n} \ast F )
	(d \bm{\omega}^{(n)} \times dx_n )
	\label{eq:theolemm}
\end{align}
valid for any $\mathscr{B}^n \times \mathscr{B}_X$-measurable bounded function $f$.
Equation~\eqref{eq:theolemm} holds when $f$ is a simple function from the assumption of the induction,
and for general $f$, we can prove the equation by taking a monotone sequence of 
simple functions converging pointwise to $f$.
Thus we have shown Eq.~\eqref{eq:main2}

Now we define a sequence of Markov kernels $\{  \nu^n_\cdot (\cdot) \}$ by
\[
	\nu^n_x (B) :=
	\tilde{\nu}^n_x (B \times \Omega_X)
	\quad
	(x \in \Omega_X, B \in \mathscr{B}^n).
\]
From the definition of $\tilde{\nu}^n_\cdot (\cdot)$,
for each $B \in \mathscr{B}^n$
we have
\begin{align*}
	\nu^{n+1}_x (B \times \Omega)
	&=
	\int_{\Omega^n \times \Omega_X}
	\tilde{\nu}^1_{x_n}
	(
	(B\times \Omega \times \Omega_X)
	\rvert_{\bm{\omega}^{(n)}  }
	)
	\tilde{\nu}^n_x 
	(d \bm{\omega}^{(n)} \times dx_n)
	\\
	&=
	\int_{\Omega^n \times \Omega_X}
	\chi_B(\bm{\omega}^{(n)})
	\tilde{\nu}^n_x (d \bm{\omega}^{(n)} \times dx_n)
	\\
	&=
	\tilde{\nu}^n_x (B \times \Omega_X)
	=
	\nu^n_x (B).
\end{align*}
Thus, from Kolmogorov extension theorem, 
there exists a probability measure $\nu^\infty_x (\cdot)$
with the outcome space $ ( \Omega^\infty , \mathscr{B}^\infty )$
such that
\begin{equation}
	\nu^{\infty}_x (B \times \Omega^\infty)
	=
	\nu^n_x (B)
	\quad
	(n \geq 1, B \in \mathscr{B}^n)
	\label{eq:kernelconsistent}
\end{equation}
for each $x \in \Omega_X$.
To show the $\mathscr{B}_X$-measurability of $\nu^\infty_\cdot (B)$ 
$(B \in \mathscr{B}^\infty)$,
define a class $\mathscr{D}$ of subsets of $\Omega^\infty$ by
\[
	\mathscr{D}
	:=
	\Set{
	B \in \mathscr{B}^\infty
	|
	\text{
	$\nu_\cdot^\infty (B)$ is $\mathscr{B}_X$-measurable
	}
	}.
\]
Then $\mathscr{D}$ is a Dynkin class and, from Eq.~\eqref{eq:kernelconsistent},
$\mathscr{D}$ contains the class
\[
	\Set{
	B \times \Omega^\infty
	|
	B \in \mathscr{B}^n \, (1 \leq n < \infty)
	},
\]
which generates $\mathscr{B}^\infty .$
Therefore the Dynkin\rq{}s theorem assures that $\tilde{\nu}^\infty_\cdot (\cdot )$
is a Markov kernel.
Thus we can define a POVM $\tilde{E}_\infty $
by
\begin{equation}
	\tilde{E}_\infty (B)
	:=
	\int_{\Omega_X}
	\nu^\infty_x (B)
	F (dx)
	\quad
	(B \in  \mathscr{B}^\infty),
	\notag
\end{equation}
which satisfies $\tilde{E}_\infty \preceq F$.
Then for each $B \in \mathscr{B}^n$, we have
\begin{align*}
	\tilde{E}_\infty (B \times \Omega^\infty )
	&=
	\int_{\Omega_X}
	\tilde{\nu}^n_x (B \times \Omega_X)
	F(dx)
	=
	(\mathcal{I}^{\ast n } \ast F ) (B \times \Omega_X)
	=
	E_\infty (B \times \Omega^\infty),
\end{align*}
where we have used Eq.~\eqref{eq:main2} in the second equality.
This implies that
$ E_\infty  = \tilde{E}_\infty \preceq F $,
which completes the proof.
\end{proof}
The part of the result of Theorem~\ref{theo:minimal} 
(conservation of $E_\infty$ by $\mathcal{I}$)
was first obtained in the PhD thesis by the author~\cite{kuramochiphd2014}.

\section{Examples of the infinite composition: photon counting and quantum counter measurements}
\label{sec:4}
In this section, we consider typical examples of standard Borel CP instruments,
namely photon counting~\cite{%
doi10.1080/713820643,0954-8998-1-2-005,PhysRevA.41.4127,PhysRevA.91.032110} 
and quantum counter~\cite{%
PhysRevLett.68.3424,%
PhysRevA.53.3808,PhysRevA.91.032110}
instruments and evaluate the infinite compositions of them.

Let the system Hilbert space $\mathcal{H}$ correspond to a single-mode photon field,
and have a complete orthonormal system $\{  \ket{n} \}_{n \in \mathbb{N}}$
called photon number eigenstates.
Here the set of natural numbers $\mathbb{N}$ contains $0$.
We denote the power set of $\mathbb{N}$ as $2^\mathbb{N}$,
and the countable product space of $(\mathbb{N} , 2^\mathbb{N})$
as $(\mathbb{N}^\infty , \mathscr{B} (\mathbb{N}^\infty)).$

The photon counting and quantum counter instruments for a finite time interval $t > 0$
are discrete and pure CP instruments
with a outcome space $(\mathbb{N} , 2^\mathbb{N})$ defined by
\begin{gather}
	\mathcal{I}^{\text{pc}}_{B } (b)
	:=
	\sum_{m \in B}
	{ M^{\text{pc}}_m }^\ast 
	b
	M^{\text{pc}}_m 
	,
	\notag
	\\
	M^{\text{pc}}_m 
	:=
	\sum_{n =0}^\infty
	\sqrt{
	p^{\text{pc}}(m+n | n )
	}
	\ket{ n } 
	\bra{n+m} ,
	\notag 
	\\
	p^{\text{pc}}(m| n )
	:=
	\binom{n}{m}
	(1 - e^{-\lambda t})^m e^{\lambda t (n-m)},
	\notag
\end{gather}
for the photon counting 
instrument~\cite{doi10.1080/713820643,0954-8998-1-2-005,PhysRevA.41.3891,PhysRevA.91.032110}, 
and 
\begin{gather}
	\mathcal{I}^{\text{qc}}_{B } (b)
	:=
	\sum_{m \in B}
	{ M^{\text{qc}}_m }^\ast 
	b
	M^{\text{qc}}_m 
	,
	\notag
	\\
	M^{\text{qc}}_m  
	:=
	\sum_{n =0}^\infty
	\sqrt{
	p^{\text{qc}}(m | n )
	}
	\ket{ n+m} 
	\bra{n} ,
	\notag
	\\
	p^{\text{qc}}(m| n )
	:=
	\binom{n + m}{m}
	(e^{\lambda t} -1 )^m e^{ - \lambda t (n  +  m + 1)},
	\notag
\end{gather}
for the quantum counter instrument~\cite{PhysRevA.53.3808,PhysRevA.91.032110}.
Here $\lambda$ is a positive constant corresponding to the  coupling strength between
the detector and the photon field.
The infinite composition of $\mathcal{I}^{\text{pc,qc}}$ is a POVM 
$E^{\text{pc,qc}}_\infty$
with the infinite product outcome space 
$(\mathbb{N}^\infty , \mathscr{B} (\mathbb{N}^\infty))$.
Abusing the notation, $\mathcal{I}^{\mathrm{pc},\mathrm{qc}}_{\{  m \}} (\cdot)$
is denoted as $\mathcal{I}^{\mathrm{pc},\mathrm{qc}}_{  m } (\cdot) .$

We define the photon number observable $E^N (\cdot)$
by
\[
	E^N(B)
	:=
	\sum_{n \in B} \ket{n} \bra{n}
	\quad
	(B \in 2^\mathbb{N}),
\]
and a POVM $E^X$ with its outcome space 
$(\mathbb{R}_+  , \mathscr{B}   (  \mathbb{R}_+ )  )$,
where $\mathbb{R}_+$ is a real half-line $(0, \infty)$
and $\mathscr{B}   (  \mathbb{R}_+ )$ is the Borel $\sigma$-algebra
of $\mathbb{R}_+$,
by
\begin{gather}
	E^X (B)
	=
	\int_B
	F_x dx
	\quad
	(B \in \mathscr{B}   (  \mathbb{R}_+ ) ),
	\label{eq:Xdef}
	\\
	F_x
	=
	\sum_{n \in \mathbb{N}}
	\frac{    
	e^{-x} x^n
	}{ 
	n!
	}
	\ket{n} \bra{n} .
	\notag
\end{gather}
Here, $dx$ in Eq.~\eqref{eq:Xdef} is the ordinary Borel measure on the real line.

The following theorem gives explicit forms of the infinite compositions of $\mathcal{I}^{\text{pc,qc}}$.

\begin{theo}
\label{theo:pcqc}
\begin{enumerate}
\item
$E^{\mathrm{pc}}_\infty \simeq E^N  $.
\item
$E^{\mathrm{qc}}_\infty \simeq E^X  $.
\end{enumerate}
\end{theo}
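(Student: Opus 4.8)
The plan is to identify, for each instrument, a single-step composition relation that already holds between the instrument and the candidate observable, and then bootstrap it through the infinite composition using the Kolmogorov consistency / uniqueness machinery of Section~\ref{sec:3}. For part 1, the key observation is that $\mathcal{I}^{\mathrm{pc}}_{m}(E^N(B)) = \sum_{n\in B} p^{\mathrm{pc}}(m+n\mid n)\ket{n}\bra{n}$, so that a direct computation of $\mathcal{I}^{\mathrm{pc}}_{m}\circ\cdots$ shows that the finite composition $\mathcal{I}^{\ast n}_{\{m_1\}\times\cdots\times\{m_n\}}(I)$ is diagonal in the number basis, with diagonal entries given by products of the transition probabilities $p^{\mathrm{pc}}$. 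Summing over all outcomes $m_1,\dots,m_n$ gives $E^{\mathrm{pc}}_n(\Omega^n) = I$ trivially, but more usefully, because $p^{\mathrm{pc}}(\cdot\mid n)$ is a probability distribution supported on $\{0,1,\dots,n\}$, the diagonal entry of $E^{\mathrm{pc}}_\infty$ on a cylinder set is a product of stochastic kernels acting on $\mathbb{N}$. I would then exhibit explicit Markov kernels in both directions: $E^{\mathrm{pc}}_\infty \preceq E^N$ because $E^{\mathrm{pc}}_\infty$ is obtained from $E^N$ by the Markov kernel that, given the true photon number $n$, generates the sequence $(m_1,m_2,\dots)$ of successive photon losses (a Markov chain on $\mathbb{N}$ that is absorbed at $0$); and $E^N \preceq E^{\mathrm{pc}}_\infty$ because the photon number $n$ can be recovered $E^{\mathrm{pc}}_\infty$-almost surely as $n = \sum_{i\geq 1} m_i$, the total number of photons ever counted. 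Formally one checks $E^{\mathrm{pc}}_\infty$ is supported on sequences with finite sum and that the map $(m_i)\mapsto \sum_i m_i$ intertwines the two POVMs; this gives the deterministic (hence Markov) kernel realizing $E^N$ from $E^{\mathrm{pc}}_\infty$.

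For part 2, the strategy is analogous but the recovered quantity is continuous. The relevant single-step identity is $\mathcal{I}^{\mathrm{qc}}_{m}(E^X(B)) = \int_B \sum_n p^{\mathrm{qc}}(m\mid n)\,\tfrac{e^{-x}x^n}{n!}\ket{n+m}\bra{n+m}\,dx$, and one should recognize that $\sum_n p^{\mathrm{qc}}(m\mid n)\tfrac{e^{-x}x^n}{n!}$ is, up to the shift $n\mapsto n+m$, again of the form $\tfrac{e^{-x'}(x')^{n+m}}{(n+m)!}$ with $x' = e^{\lambda t}x$; concretely, $E^X$ is "self-reproducing" under $\mathcal{I}^{\mathrm{qc}}$ in the sense that $\mathcal{I}^{\mathrm{qc}}\ast E^X$, restricted to the outcome count, is a Poisson-thinning/dilation relation. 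The cleanest route is: first show $\mathcal{I}^{\mathrm{qc}} \ast E^X \simeq E^X$ directly (i.e.\ $E^X$ is conserved by $\mathcal{I}^{\mathrm{qc}}$), by producing the explicit Markov kernels — given the outcome $m$ and the value $x$ of the $E^X$ measurement performed afterwards, one recovers the earlier $E^X$-value as $e^{-\lambda t}x$ (a deterministic map), and conversely given the earlier $E^X$-value $y$ one generates $m$ as a Poisson variable with mean $(e^{\lambda t}-1)y$ and then $x = e^{\lambda t}y$; these are mutually inverse up to the stochastic step, giving $\simeq$. Then invoke Theorem~\ref{theo:minimal}: $E^X$ is conserved, so $E^{\mathrm{qc}}_\infty \preceq E^X$. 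For the reverse, I would show that along an $E^{\mathrm{qc}}_\infty$-typical sequence $(m_1,m_2,\dots)$, the normalized partial count $e^{-\lambda t N}\sum_{i=1}^{N} m_i$ (or an analogous averaging — this is exactly the convergence issue flagged in the introduction as the correction to Ref.~\cite{PhysRevA.53.3808}) converges $E^{\mathrm{qc}}_\infty$-almost surely to a random variable whose distribution, conditioned on the initial photon number $n$, is $\mathrm{Gamma}$-type, matching $F_x$; this limit defines a Markov (indeed a.s.-deterministic) kernel $\Omega^\infty \to \mathbb{R}_+$ realizing $E^X$ from $E^{\mathrm{qc}}_\infty$, hence $E^X \preceq E^{\mathrm{qc}}_\infty$.

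The main obstacle, and the place where genuine work is needed rather than bookkeeping, is the reverse direction of part 2: establishing the almost-sure convergence of the normalized count number under the POVM $E^{\mathrm{qc}}_\infty$ and identifying its limiting law with $E^X$. This is the step that Ref.~\cite{PhysRevA.53.3808} handled with an insufficient argument; the fix is to work inside the classical probability space obtained by conditioning $E^{\mathrm{qc}}_\infty$ on an initial number state $\ket{n}\bra{n}$ — under this conditioning the outcome sequence $(m_i)$ becomes an honest classical stochastic process (a pure-birth chain on $\mathbb{N}$ driven by the kernels $p^{\mathrm{qc}}$), and one applies a martingale convergence theorem (the appropriately normalized count is a nonnegative martingale, or becomes one after the exponential rescaling) to get the a.s.\ limit, then computes its Laplace transform to verify it is distributed as the mixed-Poisson/Gamma law encoded by $F_x$. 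Once that limit theorem is in hand, assembling the deterministic kernels and checking the intertwining identities on cylinder sets via Dynkin's theorem (exactly as in the proof of Theorem~\ref{theo:minimal}) is routine. For part 1 the analogous "limit" is trivial because $\sum_i m_i$ is a finite sum $E^{\mathrm{pc}}_\infty$-a.s., so no convergence subtlety arises there.
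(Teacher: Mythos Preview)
Your proposal is correct in outline and for Part~1 is essentially identical to the paper's proof: both show $\mathcal{I}^{\mathrm{pc}}\ast E^N\simeq E^N$ and recover $E^N$ from $E^{\mathrm{pc}}_\infty$ via the deterministic kernel $\bm{m}^{(\infty)}\mapsto\sum_i m_i$ (the paper phrases the forward direction via Theorem~\ref{theo:minimal} rather than writing the kernel explicitly, but this is cosmetic). One small glitch: your displayed formula for $\mathcal{I}^{\mathrm{qc}}_m(E^X(B))$ has the Heisenberg picture backwards; the correct expression carries $\ket{n}\bra{n}$ with weight $\tfrac{e^{-x}x^{n+m}}{(n+m)!}\,p^{\mathrm{qc}}(m\mid n)$, not $\ket{n+m}\bra{n+m}$. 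This does not affect your argument since the self-reproducing conclusion $\mathcal{I}^{\mathrm{qc}}_m(F_x)=e^{-\lambda t}p^{\mathrm{qc}}(m\mid e^{-\lambda t}x)\,F_{e^{-\lambda t}x}$ is what you actually need, and it is correct.

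For Part~2 your route differs from the paper's in a way worth noting. The paper does \emph{not} first prove conservation of $E^X$ and then invoke Theorem~\ref{theo:minimal}; instead it iterates the identity $\mathcal{I}^{\mathrm{qc}}_m(F_x)=e^{-\lambda t}p^{\mathrm{qc}}(m\mid e^{-\lambda t}x)F_{e^{-\lambda t}x}$ to obtain the explicit integral representation $E^{\mathrm{qc}}_\infty(\cdot)=\int_0^\infty \nu_x^\infty(\cdot)\,E^X(dx)$, where $\nu_x^\infty$ is the \emph{product} of independent Poisson laws with means $(e^{\lambda t}-1)e^{\lambda t(i-1)}x$. This single formula gives $E^{\mathrm{qc}}_\infty\preceq E^X$ at once, and---more importantly---reduces the reverse direction to showing $X_k\to x$ $\nu_x^\infty$-a.s., which is elementary (Chebyshev plus Borel--Cantelli on a sum of independent Poissons), after which $E^{\mathrm{qc}}_\infty(X_\infty^{-1}(B))=\int\chi_B(x)\,E^X(dx)=E^X(B)$ follows immediately with no limit-law identification needed. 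Your plan of conditioning on $\ket{n}\bra{n}$ instead yields a genuine branching-type Markov chain (negative-binomial offspring), for which $e^{-\lambda t k}(N_k+1)$ is indeed a nonnegative martingale and converges a.s., but you then have to compute the limit's Laplace transform separately to match it to the Gamma law $x\mapsto e^{-x}x^n/n!$. Both arguments are valid; the paper's buys you the limit identification for free by working under $\nu_x^\infty$ rather than under the number-state conditioning, at the cost of first spotting the product-Poisson disintegration.
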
 
\begin{proof}
\begin{enumerate}
\item
From 
\begin{align}
	\mathcal{I}^{\text{pc}}_m
	(\ket{n_1}  \bra{n_1})
	&=
	p^{\mathrm{pc}} (m|m+n_1)
	\ket{  m+ n_1  }  \bra{  m +  n_1}
	\label{eq:pcn}
	\\
	&=
	\sum_{n \in \mathbb{N}}
	\delta_{n, m+n_1}
	p^{\mathrm{pc}} (m|n)
	\ket{n} \bra{n}
	\notag
\end{align}
and
\begin{align*}
	\sum_{ m , n_1 \in \mathbb{N}}
	\delta_{ n ,m + n_1  }
	\mathcal{I}^{\text{pc}}_m
	(\ket{n_1}  \bra{n_1})
	&=
	\sum_{ m , n_1 \in \mathbb{N}}
	\delta_{ n ,m + n_1  }
	p^{\mathrm{pc}} (m|m+n_1)
	\ket{  m+ n_1  }  \bra{  m +  n_1}
	\\
	&=
	\ket{n} \bra{n},
\end{align*}
we obtain $\mathcal{I}^{\mathrm{pc}} \ast E^N \simeq E^N .$
Thus, from Theorem~\ref{theo:minimal}, $E^{\mathrm{pc}}_\infty \preceq E^N$ holds.

In order to show 
$E^N \preceq   E^{\mathrm{pc}}_\infty $,
let us define $\mathscr{B} (\mathbb{N}^\infty)$-measurable
stochastic variables $M_k (\bm{m}^{(\infty)})   $ and 
$ M_\infty(\bm{m}^{(\infty)} )$ 
by
\begin{gather}
	M_k (\bm{m}^{(\infty)})
	:=
	\sum_{i=1}^k m_k \in \mathbb{N} ,
	\label{eq:mkdef}
	\\
	M_\infty (\bm{m}^{(\infty)})
	:=
	\lim_{k \to \infty}
	M_k (\bm{m}^{(\infty)})
	\in 
	\mathbb{N} \cup \{ \infty \} ,
	\notag
\end{gather}
where $\bm{m}^{(\infty)} := (m_1 ,m_2,\cdots) \in \mathbb{N}^\infty$.
Since $\delta_{m , M_k (\bm{m}^{(\infty)}) } \to \delta_{m , M_\infty (\bm{m}^{(\infty)}) }$
for each $m \in \mathbb{N}$ and each $\bm{m}^{(\infty)} \in \mathbb{N}^\infty$ , 
we have
\begin{align}
	E^{M_\infty} (  m)
	&:=
	\int_{\mathbb{N}^\infty}
	\delta_{m , M_\infty (\bm{m}^{(\infty)}) }
	E^{\mathrm{pc}}_\infty (d \bm{m}^{(\infty)}) 
	\notag
	\\
	&=
	\lim_{k \to \infty}
	\int_{\mathbb{N}^\infty}
	\delta_{m , M_k (\bm{m}^{(\infty)}) }
	E^{\mathrm{pc}}_\infty (d \bm{m}^{(\infty)})
	\notag
	\\
	&=
	\lim_{k \to \infty}
	\sum_{ m_1, \cdots , m_k \in \mathbb{N} }
	\delta_{m_1+ \cdots + m_k , m}
	\mathcal{I}^{\mathrm{pc}}_{m_1}
	\circ 
	\cdots
	\circ
	\mathcal{I}^{\mathrm{pc}}_{m_k}
	(I).
	\label{eq:mklaw}
\end{align}
Here 
$E^{M_\infty} (m) = E^{M_\infty}(\{ m \})$ is the POVM derived by $M_\infty$
and the limit is in the sense of the weak operator topology.
Let us evaluate Eq.~\eqref{eq:mklaw}.
From Eq.~\eqref{eq:pcn},
for each $k\geq 1$ and $(m_1 , \cdots , m_k , n_k) \in \mathbb{N}^{k+1}$
we have
\begin{align*}
	&\mathcal{I}^{\mathrm{pc}}_{m_1}
	\circ \cdots \circ
	\mathcal{I}^{\mathrm{pc}}_{m_k}
	(\ket{n_k} \bra{n_k})
	\\
	&=
	\left(
	\prod_{i=1}^k
	p^{\mathrm{pc}}
	(m_i | m_i + \cdots +m_k + n_k)
	\right)
	\ket{m_1 + \cdots + m_k + n_k}
	\bra{m_1 + \cdots + m_k + n_k}
	,
\end{align*}
and thus
\begin{align*}
	&\mathcal{I}^{\mathrm{pc}}_{m_1}
	\circ \cdots \circ
	\mathcal{I}^{\mathrm{pc}}_{m_k}
	(I)
	\\
	&=
	\sum_{m_1 , \cdots , m_k , n \in \mathbb{N}}
	p^{\mathrm{pc}}_k (m_1 , \cdots , m_k |n)
	\ket{n}
	\bra{n},
\end{align*}
where
\begin{equation}
	p^{\mathrm{pc}}_k (m_1 , \cdots , m_k |n)
	:=
	\prod_{i=1}^k
	p^{\mathrm{pc}}
	(m_i | n - m_1 - \cdots - m_{i-1})
	.
	\label{eq:pccond}
\end{equation}
By performing some calculations,
the distribution of $m_1 + \cdots + m_k$ for the conditional distribution~\eqref{eq:pccond}
is evaluated to be
\begin{align*}
	p^{\mathrm{pc}}_k (m|n)
	&:=
	\sum_{m_1 , \cdots , m_k \in \mathbb{N}}
	\delta_{ m_1 + \cdots + m_k ,m}
	p^{\mathrm{pc}}_k (m_1 , \cdots , m_k |n)
	\\
	&=
	\binom{ n }{ m  } (1 - e^{-\lambda t k })^m (e^{-\lambda t k})^{n-m} 
	\\
	&\to
	\delta_{n, m} 
	\quad (k \to \infty).
\end{align*}
Thus from Eq.~\eqref{eq:mklaw}, we have
\begin{align}
	E^{M_\infty}
	(m)
	=
	\lim_{k\to \infty}
	\sum_{n \in \mathbb{N}}
	p^{\mathrm{pc}}_k (m|n)
	\ket{n} \bra{n}
	=
	\ket{m} \bra{m} ,
	\label{eq:pclast}
\end{align}
which implies that $E^N = E^{M_\infty} \preceq E^{\mathrm{pc}}_\infty$
and we have proved the assertion.
Note that Eq.~\eqref{eq:pclast} indicates that $E^{M_\infty} (\infty) = I - E^{M_\infty} (\mathbb{N}) = O ,$
i.e. $M_k$ is convergent $E^{\mathrm{pc}}_\infty$-almost surely.
%
\item
From
\begin{gather*}
	\mathcal{I}^{\mathrm{qc}}_m (F_x)
	=
	e^{-\lambda t}
	p^{\mathrm{qc}} (m | e^{-\lambda t}  x) 
	F_{e^{-\lambda t} x },
	\\
	p^{\mathrm{qc}} (m |x)
	:=
	\frac{ [( e^{\lambda t}  -1) x]^m  }{m!}
	\exp [ - [( e^{\lambda t}  -1) x] ] ,
\end{gather*}
we have
\begin{align}
	\mathcal{I}^{\mathrm{qc}}_{m_1} 
	\circ \cdots \circ
	\mathcal{I}^{\mathrm{qc}}_{m_k}
	(F_x)
	&=
	e^{-\lambda t k}
	\left(
	\prod_{i=1}^k
	p^{\mathrm{qc}} (m_i | e^{- \lambda t(k-i+1)}x)
	\right)
	F_{e^{-\lambda t k }x},
	\notag
	\\
	\mathcal{I}^{\mathrm{qc}}_{m_1} 
	\circ \cdots \circ
	\mathcal{I}^{\mathrm{qc}}_{m_k}
	(I)
	&=
	\int_0^\infty
	\mathcal{I}^{\mathrm{qc}}_{m_1} 
	\circ \cdots \circ
	\mathcal{I}^{\mathrm{qc}}_{m_k}
	(F_x) 
	dx
	\notag
	\\
	&=
	\int_0^\infty
	\left(
	\prod_{i=1}^k
	p^{\mathrm{qc}} (m_i | e^{ \lambda t ( i - 1 )}x)
	\right)
	E^X (dx).
	\label{eq:qc1}
\end{align}
Let $\nu_x^\infty (\cdot)$ be the product measure of $p^{\mathrm{qc}} (\cdot | e^{ \lambda t ( i - 1 )}x)$
with respect to $i \geq 1$ with its outcome space 
$(\mathbb{N}^\infty , \mathscr{B}(\mathbb{N}^\infty ))$.
Then Eq.~\eqref{eq:qc1} implies that
\begin{align}
	E^{\mathrm{qc}}_\infty (\cdot)
	=
	\int_0^\infty \nu_x^\infty (\cdot) E^X (dx),
	\label{eq:qc2}
\end{align}
and we have shown $E^{\mathrm{qc}}_\infty \preceq E^X .$
To show 
$
 E^X 
\preceq
E^{\mathrm{qc}}_\infty 
$,
let us define $\mathscr{B}(\mathbb{N}^\infty)$-measurable stochastic non-negative variables 
$X_k (\bm{m}^{(\infty)})$ by
\begin{align}
	X_k (\bm{m}^{(\infty)})
	:=
	e^{-\lambda t k}
	\sum_{i=1}^k m_i.
	\label{eq:xkdef}
\end{align}
If we denote the expectation with respect to $\nu_x^\infty (\cdot)$
as
$\mathbb{E}_x [ \cdot ]$,
we have
\begin{gather*}
	\mathbb{E}_x
	[ X_k ] 
	=
	(1 - e^{-\lambda t k}) x,
	\\
	\mathbb{E}_x
	[  ( X_k - \mathbb{E}_x
	[ X_k ])^2   ]
	=
	e^{-\lambda t k }
	(1 - e^{-\lambda t k } ) x,
\end{gather*}
where we have omitted the dependence of $\bm{m}^{(\infty)} \in \mathbb{N}^\infty.$
Thus from Chebyshev's inequality
we obtain
\begin{align*}
	\nu_x (\Set{ |X_k -x| > e^{-\lambda t k /4} }  )
	&\leq
	e^{\lambda t k /2}
	\mathbb{E}_x [ |X_k -x|^2  ]
	\leq 
	C_x e^{-\lambda t k /2},
\end{align*}
where $C_x$ is a some positive constant independent of $k.$
Then Borel-Cantelli lemma assures that $X_k $ converges to $x$
$\nu_x^\infty$-almost surely.
Therefore, from Eq.~\eqref{eq:qc2},
$X_k$ is convergent $E^{\mathrm{qc}}_\infty$-almost surely.
Then we can define a non-negative stochastic variable 
$X_\infty := \lim_{k \to \infty} X_k$,
which satisfies $X_\infty = x$ 
$\nu^\infty_x$-almost surely,
i.e. $\nu_x ( X_\infty^{-1} (B) )= \chi_B (x)$
for each $B \in \mathscr{B}(\mathbb{R}_+).$
Therefore, for each $B \in \mathscr{B}(\mathbb{R}_+)$, 
we have
\begin{align*}
	\int_{\mathbb{N}^\infty}
	\chi_B (X_\infty (\bm{m}^{(\infty)}))
	E^{ \mathrm{qc} }_\infty 
	( d \bm{m}^{(\infty)})
	&=
	E^{ \mathrm{qc} }_\infty 
	(X^{-1}_\infty (B))
	\\
	&=
	\int_0^\infty
	\nu_x^\infty 
	(X^{-1}_\infty (B)) E^X (dx)
	\\
	&=
	\int_0^\infty
	\chi_B (x)
	E^X (dx)
	=
	E^X (B),
\end{align*}
which implies $E^X \preceq E^{\mathrm{qc}}_\infty.$
Thus we have proved 
$E^{\mathrm{qc}}_\infty \simeq E^X$.
\qedhere
\end{enumerate}
\end{proof}
According to the above proof,
we also obtain the following theorem concerning the convergences of the stochastic variables
$M_k$ and $X_k$.
\begin{theo}
\label{theo:last}
\begin{enumerate}
\item[(i)]
$M_k$ defined by Eq.~\eqref{eq:mkdef}
is convergent $E^{\mathrm{pc}}_\infty$-almost surely
and 
the POVM corresponding to
the distribution of $\lim_{k \to \infty} M_k$
coincides with the photon number observable $E^N$.
\item[(ii)]
$X_k$ defined by Eq.~\eqref{eq:xkdef}
is convergent $E^{\mathrm{qc}}_\infty$-almost surely
and the POVM corresponding to the distribution of 
$\lim_{k \to \infty}  X_k$
coincides with $E^X$.
\end{enumerate}
\end{theo}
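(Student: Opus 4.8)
The plan is to note that both assertions are, in substance, already contained in the proof of Theorem~\ref{theo:pcqc}; the remaining work is only to isolate the almost-sure convergence statements, check that the relevant divergence sets are measurable, and read off the limiting POVMs. First I would record that, since $M_k$ and $X_k$ are $\mathscr{B}(\mathbb{N}^\infty)$-measurable, the set of points at which $(M_k)_k$ converges and the analogous set for $(X_k)_k$ are themselves elements of $\mathscr{B}(\mathbb{N}^\infty)$ (being countable combinations of sets of the form $\Set{ |\,\cdot_j - \cdot_\ell\,| < 1/m }$), so that the phrases ``convergent $E^{\mathrm{pc}}_\infty$-almost surely'' and ``convergent $E^{\mathrm{qc}}_\infty$-almost surely'' are meaningful.

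For (i), I would argue as follows. Since $M_k(\bm{m}^{(\infty)}) = \sum_{i=1}^k m_i$ is nondecreasing in $k$, the limit $M_\infty := \lim_k M_k$ exists pointwise in $\mathbb{N} \cup \{\infty\}$, and $(M_k)_k$ converges to a finite value at $\bm{m}^{(\infty)}$ precisely when $M_\infty(\bm{m}^{(\infty)}) < \infty$. Equation~\eqref{eq:pclast} gives $E^{M_\infty}(\{m\}) = \ket{m}\bra{m}$ for every $m \in \mathbb{N}$, hence $E^{M_\infty}(\mathbb{N}) = \sum_{m \in \mathbb{N}} \ket{m}\bra{m} = I$ in the weak operator topology, and therefore $E^{\mathrm{pc}}_\infty(\Set{ M_\infty = \infty }) = E^{M_\infty}(\{\infty\}) = I - E^{M_\infty}(\mathbb{N}) = O$, which is exactly the asserted $E^{\mathrm{pc}}_\infty$-almost sure convergence of $M_k$. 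The POVM corresponding to the distribution of $\lim_k M_k = M_\infty$ is then $E^{M_\infty}$, and since $E^{M_\infty}(\{m\}) = \ket{m}\bra{m} = E^N(\{m\})$ for each $m$, countable additivity yields $E^{M_\infty} = E^N$.

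For (ii), the Chebyshev estimate combined with the Borel--Cantelli lemma in the proof of Theorem~\ref{theo:pcqc}(2) already shows that, for each fixed $x > 0$, $X_k \to x$ holds $\nu_x^\infty$-almost surely. Let $N_{\mathrm{div}} \in \mathscr{B}(\mathbb{N}^\infty)$ be the set on which $(X_k)_k$ fails to converge; then $\nu_x^\infty(N_{\mathrm{div}}) = 0$ for every $x$, and by Eq.~\eqref{eq:qc2} we get $E^{\mathrm{qc}}_\infty(N_{\mathrm{div}}) = \int_0^\infty \nu_x^\infty(N_{\mathrm{div}})\, E^X(dx) = O$, which is the claimed $E^{\mathrm{qc}}_\infty$-almost sure convergence. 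Setting $X_\infty := \lim_k X_k$ on $\mathbb{N}^\infty \setminus N_{\mathrm{div}}$ (and, say, $0$ on $N_{\mathrm{div}}$), one has $X_\infty = x$ $\nu_x^\infty$-almost surely, so for each $B \in \mathscr{B}(\mathbb{R}_+)$ the computation already performed in the proof of Theorem~\ref{theo:pcqc}(2) gives $E^{\mathrm{qc}}_\infty(X_\infty^{-1}(B)) = \int_0^\infty \nu_x^\infty(X_\infty^{-1}(B))\, E^X(dx) = \int_0^\infty \chi_B(x)\, E^X(dx) = E^X(B)$, identifying the POVM of $\lim_k X_k$ with $E^X$.

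I do not expect a genuine obstacle, since every nontrivial ingredient (the explicit evaluation of $p^{\mathrm{pc}}_k(m|n)$, the Chebyshev/Borel--Cantelli argument, and the disintegration formulas~\eqref{eq:pclast} and~\eqref{eq:qc2}) was established en route to Theorem~\ref{theo:pcqc}. The only points needing a word of care are the measurability of the convergence sets (standard for sequences of measurable functions) and the passage from ``$\nu_x^\infty$-null for every $x$'' to ``$E^{\mathrm{qc}}_\infty$-null,'' which is immediate from the integral representation~\eqref{eq:qc2}; and, in the photon-counting case, the understanding that ``convergent'' means convergent to a \emph{finite} limit, which is precisely the content of $E^{M_\infty}(\{\infty\}) = O$.
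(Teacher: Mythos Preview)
Your proposal is correct and follows essentially the same approach as the paper: the paper itself presents Theorem~\ref{theo:last} as a direct by-product of the proof of Theorem~\ref{theo:pcqc}, extracting the almost-sure convergence of $M_k$ from the identity $E^{M_\infty}(\{\infty\}) = I - E^{M_\infty}(\mathbb{N}) = O$ noted after Eq.~\eqref{eq:pclast}, and that of $X_k$ from the Borel--Cantelli argument combined with the disintegration~\eqref{eq:qc2}. Your additional remarks on measurability of the convergence sets and on defining $X_\infty$ on the null set are routine care that the paper leaves implicit.
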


We remark that the statement of Theorem~\ref{theo:last}~(ii)
is essentially the same as 
Theorem~4 of Ref.~\onlinecite{PhysRevA.53.3808}
while the proof of Ref.~\onlinecite{PhysRevA.53.3808}
is, rigorously speaking, insufficient due to the following reason.
The authors of Ref.~\onlinecite{PhysRevA.53.3808}
show that the characteristic function of $X_k$ converges to
that of $E^X$ and conclude the assertion of the theorem.
However, as well-known in the measure theoretic probability theory,
the pointwise convergence of the characteristic function,
which is equivalent to 
the convergence in distribution,
does not necessarily imply the almost sure convergence of a stochastic variable.
In this sense, our proof of Theorem~\ref{theo:last}~(ii) complements
the mathematically insufficient discussion
of Ref.~\onlinecite{PhysRevA.53.3808}.

\section{Summary}
\label{sec:5}
In summary,
we have rigorously reformulated the concept of the information conservation condition
in Definition~\ref{defi:cons}
depending on the equivalence relation among POVMs and 
the composition between an instrument and a POVM.
By using quantum Kolmogorov extension theorem,
we have constructed the infinite composition of a given standard Borel CP instrument $\mathcal{I}$.
We have shown that the infinite composition is the least informative standard Borel POVM
that is conserved by $\mathcal{I}$.
We have considered specific examples of CP instruments, namely
photon counting and quantum counter instruments,
and shown that
their infinite compositions are 
equivalent to the photon number observable $E^N$ 
and the POVM $E^X$ given by Eq.~\eqref{eq:Xdef},
respectively.
As a by-product of the proof,
we have found some results on the almost sure convergences of 
the total counting number for the photon counting 
and the properly normalized counting number
for the quantum counter cases, respectively.
The latter result on the convergence in the quantum counter measurement
complements the insufficiency of the proof in the existing work~\cite{PhysRevA.53.3808}
from the standpoint of the rigorous measure theoretic
description of quantum measurements.

\begin{acknowledgments}
The author acknowledges supports
by Japan Society for the
Promotion of Science (KAKENHI Grant No. 269905).
He also would like to thank helpful discussions with
Masahito Ueda (the Univerisity of Tokyo)
and Tomohiro Shitara (the University of Tokyo).
\end{acknowledgments}

%

\end{document}